\newtheorem{corollary}{Corollary}[section] 
\newtheorem{theorem}{Theorem}[section]
\newtheorem{lemma}{Lemma}[section]
\newtheorem{remark}{Remark}[section]
\newtheorem{definition}{Definition}[section]
\title{The Ising model on the random planar causal triangulation: bounds on the critical line and magnetization properties}
\author{George M. Napolitano and Tatyana S. Turova \\ \small{\textit {Centre for Mathematical Sciences}} \\ \small{\textit{  Division of Mathematical Statistics, Lund University}} \\ \small{\textit{S\"olvegatan 18, 22100, Lund, Sweden}}}
\date{}
\begin{document}

\maketitle

\begin{abstract}
We investigate a Gibbs (annealed) probability measure defined on Ising spin configurations on causal triangulations of the plane. We study the region where such measure can be defined and provide bounds on the boundary of this region (critical line). We prove that for any finite random triangulation the magnetization of the central spin is sensitive to the boundary conditions. Furthermore, we show that in the infinite volume limit, the magnetization of the central spin vanishes for values of the temperature high enough.
%\keywords{Gibbs measure \and random graphs \and Ising model}
%\subclass{82B20 \and 05C80}
%\PACS{}
\end{abstract}

\section{Introduction}

In the last few decades there has been an increasing interest by the scientific community in random graphs, mostly due to their wide range of application in several branches of science.

In theoretical physics, random graphs have been used as a tool to set up discrete models of quantum gravity. The basic idea underlying such models is the discretization of spacetime \textit{via} triangulations (in the spirit of Regge calculus) and the representation of fluctuating geometries, naturally arising in the path-integral approach to the quantization of gravity, in terms of random triangulations. 

More precisely, according to the path-integral formalism, the partition function in (Euclidean) quantum gravity is formally given by the expression
\begin{equation}
\label{eq:qg_partfun}
	\mathcal{Z}(\mu) = \int \mathcal{D}[g] \; e^{- \mathcal{S}_{\mu}[g]}
\end{equation}
where $\mathcal{S}_{\mu}[g]$ is the Einstein-Hilbert action, $\mu$ is the cosmological constant and the integral is intended over the space of metrics (modulo diffeomorphisms) of a manifold $M$.
A discretization scheme is often used to make sense of such expression. This can be implemented, for example, by triangulating the manifold $M$ by finite triangulations built up of equilateral triangles of side $a$, so that each metric (the dynamical variables of the continuum theory) corresponds to a different triangulation. In such framework, the discrete counterpart of the partition function (\ref{eq:qg_partfun}) can be defined as
\begin{equation}
	Z(\mu) = \sum_{T \in \mathcal{T}_M} e^{- S_\mu[T]},
\end{equation}
where $\mathcal{T}_M$ is the set of all inequivalent triangulations of $M$ and $S_\mu[T]$ is the discrete analog of the Einstein-Hilbert action. In two dimensions $S_\mu[T]$ consists of the volume term alone, thus the discrete partition function reads
\begin{equation}
	Z(\mu) = \sum_{T \in \mathcal{T}_M} e^{- \mu |F(T)|},
\end{equation}
where $|F(T)|$ denotes the number of triangles in $T$ and a factor proportional to $a^2$ has been absorbed in $\mu$. Note, however, that the above discussion is purely heuristic, as a rigorous mathematical treatment of the gravitational path-integral and its discretization is still missing. We refer the reader to  \cite{ADJ-qg} for a comprehensive treatment of this subject.

In \cite{ADF85,FD85,KKM85}, the so-called  \textit{dynamical triangulation} model is introduced as a triangulation technique of Euclidean surfaces. It was later discovered that such model produces some non-physical solutions, namely the presence of causality violating geometries.

The \textit{causal dynamical triangulation} (CDT) model was first proposed in \cite{AL98} as a possible cure for such anomalies. In this model a causal structure (mimicking that of Minkowski spacetime) is introduced in the theory from the start, by restricting the class of allowed triangulations to those that can be sliced perpendicularly to the time direction and with fixed topology on the spatial slices.

Malyshev \cite{Malyshev01} gave a solid  mathematical ground  for such models, developing a  general theory of 
 Gibbs fields on random spaces where matter (represented by the configurations of spins) is naturally coupled with the gravity (represented by the graph). 

Observe that even without spins the CDT model for two-dimensional surfaces exhibits a non-trivial phase transition. 
This model  is solved analytically: the  partition function is explicitly derived along with the scaling limits for the correlation functions in \cite{MYZ01}. 
 Nowadays most of its geometrical properties, such as its Hausdorff and spectral dimension \cite{DJW3},  are very well understood. In particular, it has been shown \cite{DJW3,MYZ01} that the model exhibits a non-trivial behaviour: 
depending on the parameters  the limiting average surface can behave as one-dimensional (subcritical regime), whereas at a certain value (criticality) it has 
properties of two-dimensional space. 

In such a framework, it is certainly interesting to consider statistical mechanical models on random planar graphs, as they can be seen  as the discrete realization of the coupling between matter fields and gravity. Probably, one of the most well-known of these systems is the Ising model on planar random lattice. This was studied and exactly solved by Kazakov et al. in \cite{kazakov,BouKaz,BDKS}, using matrix model techniques. However, the Ising model on causal triangulation seems to be a much more difficult problem to address. 

The Ising model on the critical random causal triangulation fixed {\it a priori} (the so-called \textit{quenched} version) has been proved to exhibit a phase transition \cite{KY}. Also, for the  \textit{annealed} coupling (Ising-spin configuration and triangulation sampled together at random) some  numerical \cite{AAL} and even analytic results  \cite{HSYZ}  are obtained.
Still the exact solution to the Ising model on CDT remains to be an open question.

It is worth mentioning that in recent years statistical models have been studied on different random geometries, other than  random causal triangulations, such as random trees \cite{DN} and higher dimensional random graphs \cite{BGR12}. In particular, in \cite{DN} the authors study the annealed coupling of an Ising model (with external magnetic field) with a certain class of random trees (called \textit{generic} trees). It has been proved that the set of infinite trees decorated with Ising spins consists of trees made of a single infinite path stemming from the tree's root, and on whose vertices finite trees are attached. This shows, by comparison with the case without spins \cite{DJW2}, that the spin system does not affect the geometry of the underlying graph. Furthermore, it is proved that, as a consequence of this 1-dimensional feature,  the infinite spin system does not experience any phase transition.

Quenched models of Ising spins on random graphs have also been studied in the recent literature, for example in \cite{DM10}, where Ising models on certain types of random graphs (locally tree-like graphs) are considered and an explicit expression of the free energy is provided.

Here we study the annealed coupling of an Ising model, at inverse temperature $\beta$, on a random planar causal triangulation. We provide bounds on the region of the $(\beta,\mu)$-plane where the partition function exists, improving the bounds known so far \cite{HSYZ}, at least in the low- and high-temperature regions. Furthermore, we discuss some magnetization properties of the spin system, proving that at sufficiently high-temperature, the mean magnetization of the central spin goes to zero in the thermodynamic limit.

This paper is organized as follows. In Sec. \ref{sec:defs}, we define
 the model and, following \cite{Malyshev01}, introduce the concept of \textit{spin-graph}. In Sec. \ref{sec:results} we collect the main results and we discuss them.  All the proofs are given in the remaining sections. In the Appendix we discuss the analytic properties of a series which is used throughout the paper.

\section{Definition of the model}
\label{sec:defs}

\subsection{Causal triangulations with spins}

\begin{definition}\label{DM}
A \textit{causal triangulation} $T$ is a rooted planar locally finite connected graph satisfying the following properties.
\begin{enumerate}
	\item The set of vertices at graph distance $i$ from the root vertex, together with the edges connecting them, form a cycle, denoted by $S_i=S_i(T)$ (when there is only one vertex the corresponding cycle has only one edge, i.e. it is a loop).
	\item All faces of the graph are triangles, with the only exception of the external face.
	\item One edge attached to the root vertex is marked, we call it \textit{root edge}. 
\end{enumerate}
\end{definition}

The last condition in the above definition is a technical requirement, needed to cancel out possible rotational symmetries around the root.
Here a triangle is defined as a face with exactly 3 edges incident to it, with the convention that an edge incident to the same face on both sides counts twice. 
We shall call $S_i$ the $i$-th \textit{slice} of the triangulation $T$. An example of such triangulation is showed in Fig. \ref{fig:ct}.

\begin{figure}[t!]%
\centering
\includegraphics[scale=0.8]{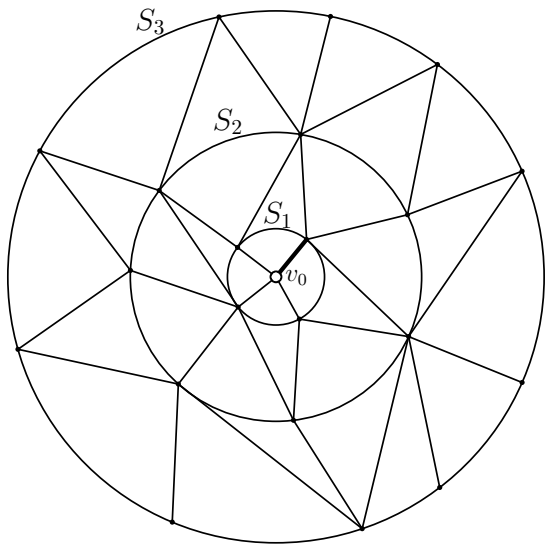}%
\caption{Causal triangulation with 3 slices. The root vertex is denoted by $v_0$ and the bold edge identifies the root edge.}%
\label{fig:ct}%
\end{figure}

The presence of a root edge allows us to unambiguously label the vertices of such triangulation. This can be done as follows. For each triangulation $T$ and for each $i\geq 1$ 
let us enumerate the vertices of the set $S_i,$ i.e., all the vertices at distance $i$ from the root as follows.

%For any graph $G$ we shall denote here $V(G)$ and $E(G)$ the sets of vertices, and edges, correspondingly.

Let $v_0$ denote the root vertex and $v_{1,1}$ the endpoint of the root edge on $S_1$, and let us denote all the other vertices on $S_1$, taken in clockwise order starting from $v_{1,1}$, by $v_{1,2}$ up to $v_{1,|S_1|}$. Here $|S_i| \equiv |V(S_i)| = |E(S_i)|$, where for any set $A$ we denote by $|A|$ its cardinality.

Given $v_{i,1}, \ldots, v_{i,|S_i|}$ , 
take the endpoint of the leftmost edge connecting 
$v_{i,1}$ to $S_{i+1}$, this will be denoted by $v_{i+1,1}$,
 and proceed as above for all the other vertices on $S_{i+1} $.

Let us denote by $\mathcal{T}_N$, $N \in \mathbb{N}$, the set of causal triangulations with $N$ slices and by $\mathcal{T}_{N,\textbf{k}}$, $\mathbf{k} = (k_1,\dots,k_N)$, the set of causal triangulations with a fixed number of vertices on each slice, that is
\begin{equation}
	\mathcal{T}_{N,\mathbf{k}} = \{ T \in \mathcal{T}_N : |S_1| = k_1,\dots, |S_N| = k_N \}.
%\label{eq:}
\end{equation}
According to this definition, the set $\mathcal{T}_N$ can be decomposed as follows
\begin{equation}
	\mathcal{T}_N = \bigcup_{k_1 = 1}^\infty \cdots \bigcup_{k_N = 1}^\infty \mathcal{T}_{N,\mathbf{k}}.
\label{eq:T_decom}
\end{equation}
Note that $T_{N,\mathbf{k}}$ is a finite set, in particular we have
\begin{equation}\label{eq:Tnk_card}
	|\mathcal{T}_{N,\mathbf{k}}| = \prod_{i=1}^{N-1} \binom{k_{i+1}+k_i-1}{k_i-1}.
%\label{eq:}
\end{equation}

We shall often use also the notation $\mathcal{T}_{N,l}$ to denote the set of causal triangulations with fixed number $l$ of vertices on the last slice, that is 
\begin{equation}
	\mathcal{T}_{N,l} = \{ T \in \mathcal{T}_N : |S_N| = l \}.
%\label{eq:}
\end{equation}

Given a finite triangulation $T \in \mathcal{T}_N$, let us denote the set of vertices of $T$ by $V(T)=\{v_0, v_{i,j}, 1\leq i\leq N, 1\leq j \leq |S_i|\}$, 
and the set of edges by $E(T)$, which is a subset of  $\{(u,v): u,v \in V(T)\}$. We shall also denote $F(T)$ the set of triangles in $T$.
It is easy to see that, given a finite triangulation $T \in \mathcal{T}_N$,  the number of vertices, edges and triangles in $T$ 
satisfy the following relations
\begin{equation}\label{vs}
	|V(T)|  = 1 + \sum_{i=1}^N |S_i|,
\end{equation}
\begin{equation}\label{tri}
	|F(T)|  = 2 \sum_{i=1}^{N-1} |S_i| + |S_N|,
\end{equation}
\begin{equation}\label{edges}
	|E(T)|  = 3 \sum_{i=1}^{N-1} |S_i| + 2 |S_N|. % = \frac{3}{2}|F(T)|+\frac{1}{2}|S_N|,
\end{equation}

%Let us also introduce a more general space of triangulations. 
%For any $0\leq K\leq N$ and $k, n\geq 1$ let $\mathcal{T}_{K;N}(k,n)$
%denote the set of all rooted triangulations defined as above, but
 %which vertices belong to the slices $S_K, \ldots, S_N$, where $|S_K|=k$, $S_N=n$. In this notations  $\mathcal{T}_{N}=\cup_{n\geq n}\mathcal{T}_{0;N}(1,n)$. Let also $\mathcal{T}_{K;N}=\cup_{k,n}\mathcal{T}_{K;N}(k,n)$. 

In the following we decorate each finite triangulation with a spin configuration, defined as follows.

\begin{definition}\label{DSG}
A \textit{spin configuration} $\sigma$ on a graph $T$ with a set of vertices $V(T)$
 is an assignment of values $+1$ (\textit{spin up}) or $-1$ (\textit{spin down}) to each vertex, i.e.
\begin{equation}\label{sp}
        \sigma \: : \: V(T) \to \{+1,-1\}^{V(T)} \equiv \Omega(T).
\end{equation}
\end{definition}
%In particular, for a finite triangulation $T \in \mathcal{T}_N$, the space of spin configurations on $T$ will be denoted by $\Omega_N(T)$.

%For simplicity of notation, we set $\sigma_v = \sigma(v)$, for the value of the spin at the vertex $v$.

Let $\Lambda_N$ denote the set of  finite triangulations with $N$ slices, together with spin configurations on them, 
i.e., 
 \begin{equation}\label{lsp}
        \Lambda_N = \left\{(T,\sigma(T)): T \in \mathcal{T}_N, \sigma(T) \in \Omega(T) \right\}.
\end{equation}

We call a {\it spin-graph of height $N$} an element of space $\Lambda_N$. Notice that any element of $\mathcal{T}_N$ is a finite graph, therefore $\Lambda_N$
is a set of finite spin-graphs.

%Similarly we define the sets
%\begin{equation}\label{sgKN}
        %\Lambda_{K;N}(k,l) = \left\{(T,\sigma(T)): T \in \mathcal{T}_{K;N}(k,l) , \ \ \sigma(T)
%\in \{+1,-1\}^{V(T)} \right\}.
%\end{equation}

\subsection{Gibbs family  on spin-graphs}

We shall define a probability measure on space 
$\Lambda_N$ and will study a random 
spin-graph sampled with respect to this measure. 
This is generally called an \textit{annealed coupling}.
The fundamental difference with  the \textit{quenched}  case
studied lately in \cite{KY} is that here 
the (Gibbs) measure is defined on a space of graphs with spins, unlike in a 
quenched case where a graph is first  sampled according to some measure on the set of graphs only, and then a Gibbs measure is defined on 
the configurations of spins on the sampled graph. 

One can find  in Malyshev \cite{Malyshev01} a general outline of the theory of Gibbs
families on spin graphs. We follow his approach and develop it here for a
class of planar triangulations.

The energy or Hamiltonian of the spin-graph $(T, \sigma)=(T,\sigma(T))$, where
$\sigma(T)=(\sigma_v)_{v \in V(T)} \in \{+1,-1\} ^{V(T)}$, is defined as
\begin{equation}\label{H}
        H(T, \sigma)= -\sum_{(u,v)\in E(T)}\sigma_u \sigma_v.
\end{equation}
Here $\sigma_u \sigma_v$ is a {\it potential} of an edge $(u,v)$, as in the well-known Ising model.

Define the partition function
\begin{equation}\label{Z}
     Z_N(\beta,\mu) = \sum_{ (T, \sigma) \in \Lambda_N} e^{-\beta H(T, \sigma)-\mu |F(T)|},
\end{equation}
where $\beta\geq 0$ is the inverse temperature, and $\mu$ is the 
\textit{cosmological constant}. 
Whenever this function is finite one can define  the following measure 
on $\Lambda_N$.

\begin{definition}\label{D1}
A {\it Gibbs distribution}  on the space of finite spin-graphs $\Lambda_N$
is a probability measure defined by
\begin{equation}\label{Gf}
     p_{N,\beta, \mu} (T, \sigma)   = \frac{e^{-\beta H(T, \sigma)-\mu |F(T)|}}{Z_N(\beta,\mu) }, \qquad (T,\sigma) \in \Lambda_N.
\end{equation}
\end{definition}

%Note that, for any function $f : \Lambda_{N} \to \mathbb{R}$,  we have
%\begin{equation}
	%\sum_{ (T, \sigma) \in \Lambda_N} f(T,\sigma) = \sum_{T \in \mathcal{T}_N} \sum_{\sigma \in \Omega_N(T)} f(T,\sigma),
%%\label{eq:}
%\end{equation}
%and, using the decomposition (\ref{eq:T_decom}), 
%\begin{equation}\label{eq:sum_decom}
	%\sum_{ (T, \sigma) \in \Lambda_N} f(T,\sigma) = \sum_{\mathbf{k}} \sum_{T \in \mathcal{T_{N,\mathbf{k}}}} \sum_{\sigma \in \Omega_N(T)} f(T,\sigma).
%%\label{eq:}
%\end{equation}
%
%We shall also define for any subset ${\mathcal A}$ of finite spin-graphs
%a partition function 
%\begin{equation}\label{ZA}
     %Z_{\mathcal A}(\beta, \mu) = \sum_{ (T, \sigma) \in {\mathcal A}} e^{-\beta H(T, \sigma)-\mu F(T)}.
%\end{equation}

\subsection{Gibbs distributions on spin-graphs with fixed boundary conditions}

Consider now a graph $T\in \mathcal{T}_N$. It is natural to call 
the vertices of the outer slice $S_N$ of $T$ the {\it graph boundary} of the graph $T$. In the following we introduce  spin-graphs with a given
 {\it boundary condition}.

Given a triangulation $T \in \mathcal{T}_{N,l}$, a spin configuration on $T$ with \textit{boundary conditions} $\tilde{\sigma} \in \{+1,-1\}^l$ is an element of the set
\begin{equation}
	\Omega^{\tilde{\sigma}}(T) = \{ \sigma \in \Omega(T) : \sigma_v =  \tilde{\sigma}_v, v \in V(S_N)\}
\label{eq:Omega_bc}
\end{equation}
and a spin-graph $(T,\sigma)$ of height $N$ with $(l,\tilde{\sigma})$-boundary conditions  is an element of
\begin{equation}
	\Lambda_{N,l}^{\tilde{\sigma}} = \left\{(T,\sigma(T)): T \in \mathcal{T}_{N,l}, \ \ \sigma(T) \in \Omega^{\tilde{\sigma}}(T) \right\}.
\label{eq:lsp_bc}
\end{equation}

Similarly to (\ref{Gf}) one can define probability measures on the set $\Lambda_{N,l}^{\tilde{\sigma}}$.

\begin{definition}\label{Dbc}
A {\it Gibbs distribution}  on the space of finite spin-graphs $\Lambda_{N,l}^{\tilde{\sigma}}$ is a probability measure defined by
\begin{equation}\label{Gf_bc}
     p_{N,l,\beta, \mu}^{\tilde{\sigma}} (T, \sigma)   = \frac{e^{-\beta H(T, \sigma)-\mu |F(T)|}}{Z_{N,l}^{\tilde{\sigma}}(\beta,\mu) }, \ \ (T, \sigma) \in \Lambda_{N,l}^{\tilde{\sigma}},
\end{equation}
where 
\begin{equation}\label{Z_bc}
     Z_{N,l}^{\tilde{\sigma}}(\beta,\mu) = \sum_{ (T, \sigma) \in \Lambda_{N,l}^{\tilde{\sigma}}} e^{-\beta H(T, \sigma)-\mu |F(T)|}.
\end{equation}
\end{definition}

When $\tilde{\sigma}_v = +1$, for all $v \in V(S_N)$, the set of spin-graphs with fixed boundary, the measure on it and the partition function will be denoted by $\Lambda_{N,l}^+$, $p_{N,l,\beta, \mu}^+$ and $Z_{N,l}^+$, respectively. Also, if 
$\tilde{\sigma}_v = -1$, for all $v \in V(S_N)$, we use notations
 $\Lambda_{N,l}^-$, $p_{N,l,\beta, \mu}^-$ and $Z_{N,l}^-$, respectively.

\section{Main results}
\label{sec:results}
\subsection{Finite triangulations.}

First, we note that, since the set of causal finite 
triangulations $\mathcal{T}_N$ is countable, the sum in (\ref{Z}) might be divergent. Hence, the partition function (\ref{Z}) (or that with boundary conditions  (\ref{Z_bc})) can be infinite. Since the partition function is decreasing with $\mu$ (when other parameters are fixed), we can define for any fixed $N$ and  $\beta$ (and
${\tilde{\sigma}}$) the critical values $\mu_{cr}^N(\beta)$ such that
\begin{equation}
	Z_N(\beta,\mu)<\infty, \mbox{ if } \mu>\mu_{cr}^N(\beta),
%\label{eq:}
\end{equation}
and
\begin{equation}
	Z_N(\beta,\mu)=\infty, \mbox{ if }  \mu<\mu_{cr}^N(\beta).
%\label{eq:}
\end{equation}

In the following theorem we provide bounds on the region of the $(\beta,\mu)$-plane where the partition function is finite for all $N \in \mathbb{N}$.

\begin{theorem} \label{thm:cr_bounds}
The partition function $Z_N(\beta,\mu)$ is finite for all $N \in \mathbb{N}$  in the region of the $(\beta,\mu)$-plane defined by
\begin{equation}
	\Delta_{f} = \{ (\beta,\mu) \in \mathbb{R}^2 : \beta \geq 0, \mu > \beta + \log(1+ 2 \cosh\beta)\}.
\label{eq:sub_reg}
\end{equation}
Moreover, if $Z_N(\beta,\mu)$ is finite for all $N$, then we necessarily have
\begin{equation}
	\mu > \max\{\log(1+ \cosh \beta + \cosh(2\beta)), \beta + \log(1+e^\beta)\}.
\label{eq:low_bound}
\end{equation}
\end{theorem}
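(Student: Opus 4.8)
The plan is to reduce both statements to questions about when a single causal-triangulation sum converges for every $N$, and to feed in three different estimates of the spin sum. Writing $Z_N=\sum_{\mathbf{k}}\sum_{T\in\mathcal{T}_{N,\mathbf{k}}}\sum_\sigma e^{\beta\sum_{(u,v)}\sigma_u\sigma_v}e^{-\mu|F(T)|}$ and using the count (\ref{eq:Tnk_card}) together with (\ref{vs})--(\ref{edges}), the problem reduces to controlling sums of the form $\sum_{\mathbf k}\prod_{i=1}^{N-1}\binom{k_{i+1}+k_i-1}{k_i-1}\,\alpha^{\sum_{i<N}k_i}\gamma^{k_N}$ for effective fugacities $\alpha,\gamma$ produced by the spin sum. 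Performing these sums from the top slice downwards, the innermost sum gives $\sum_{k_N}\binom{k_N+k_{N-1}-1}{k_{N-1}-1}\gamma^{k_N}=(1-\gamma)^{-k_{N-1}}-1$, and iterating shows that the effective per-slice ratio evolves under the map $r\mapsto \alpha/(1-r)$. Convergence for all $N$ holds precisely when the orbit of the top weight $\gamma$ stays in $[0,1)$, i.e. when $\gamma$ does not exceed the larger fixed point; the boundary is the curve $\gamma(1-\gamma)=\alpha$ (with $\gamma\ge 1/2$). The analytic input needed for the sharp upper bound is the series $C(x)=\sum_{d\ge1}\cosh(\beta d)\,x^d$ studied in the Appendix.

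For the sufficient region $\Delta_f$ I would bound the spin sum from above without discarding the cancellations in the timelike direction. First bound each of the $\sum_i k_i$ horizontal edges by $e^\beta$. After this the spins of a slice interact only downwards, so summing them layer by layer from the top is exact and gives $\prod_v 2\cosh\!\big(\beta\textstyle\sum_{u\sim v}\sigma_u\big)\le\prod_v 2\cosh(\beta d_v)$, where $d_v$ is the down-degree of $v$. Summing over $T\in\mathcal{T}_{N,\mathbf k}$ then amounts to summing the down-degrees over all compositions, which replaces the binomial weight by coefficients of powers of $C(x)$. The resulting transfer operator on slice sizes is governed by $C$, and its convergence-for-all-$N$ threshold is the boundary of $\Delta_f$. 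The gain is that keeping $\cosh(\beta d_v)$ instead of the trivial $e^{\beta d_v}$ turns the crude line $e^\mu=e^\beta(1+2e^\beta)$ into $e^\mu=e^\beta(1+2\cosh\beta)$; checking the computation at $\beta=0$, where the estimate is an identity and reproduces $\mu>\log 3$, pins down the constants.

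For the two necessary conditions I would instead bound the spin sum from below in two complementary ways and read off the critical line $\gamma(1-\gamma)=\alpha$. Keeping only the ferromagnetic ground state gives $\sum_\sigma e^{\beta\sum\sigma_u\sigma_v}\ge e^{\beta|E(T)|}$, hence $\alpha=e^{3\beta-2\mu}$, $\gamma=e^{2\beta-\mu}$, and solving $\gamma(1-\gamma)=\alpha$ yields the line $e^{\mu}=e^{\beta}(1+e^{\beta})$, i.e. $\mu>\beta+\log(1+e^\beta)$. Keeping instead only the trivial term of the high-temperature (even-subgraph) expansion gives $\sum_\sigma e^{\beta\sum\sigma_u\sigma_v}\ge 2^{|V(T)|}(\cosh\beta)^{|E(T)|}$, hence $\alpha=2\cosh^3\beta\,e^{-2\mu}$, $\gamma=2\cosh^2\beta\,e^{-\mu}$, and the line $e^{\mu}=\cosh\beta(1+2\cosh\beta)=1+\cosh\beta+\cosh2\beta$, i.e. $\mu>\log(1+\cosh\beta+\cosh2\beta)$. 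Each estimate forces its own lower bound on $\mu$, so finiteness for all $N$ forces their maximum, which is (\ref{eq:low_bound}); the first estimate dominates at low temperature and the second at high temperature, as the $\max$ reflects.

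The main obstacle is the spectral analysis of the $C(x)$-transfer operator in the upper bound: unlike the three one-parameter estimates above, it does not collapse to the elementary map $r\mapsto\alpha/(1-r)$, and one must show that the induced iteration on slice sizes stays subcritical throughout $\Delta_f$. This is exactly where the analytic properties of $C$ — its radius of convergence $e^{-\beta}$ and the location of the relevant fixed point — are needed, and where I expect the bulk of the technical work (hence the Appendix) to lie. A secondary but genuinely delicate point is the strictness of the inequalities: at a point lying exactly on one of the candidate lines the associated sum still converges for each finite $N$ and diverges only in the limit $N\to\infty$, so the strict conclusion $\mu>\cdots$ must be recovered by using that the spin-sum inequalities are strict for $\beta>0$ and that the finite-$N$ critical values increase to the stated lines.
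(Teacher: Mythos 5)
Your treatment of the two necessary conditions is correct and is in fact exactly the paper's argument: keep only the two ferromagnetic ground states (giving $x=e^{3\beta-2\mu}$, $y=e^{2\beta-\mu}$), respectively only the trivial term $2^{|V(T)|}(\cosh\beta)^{|E(T)|}$ of the high-temperature expansion (giving $x=2e^{-2\mu}\cosh^3\beta$, $y=2e^{-\mu}\cosh^2\beta$), reduce to the series $W_N(x,y)=\sum_{l}y^l\sum_{\mathbf{k}:k_N=l}\prod_i\binom{k_{i+1}+k_i-1}{k_i-1}x^{k_i}$, and analyze it through the recursion $B_i=(1-xB_{i-1})^{-1}$ — your map $r\mapsto\alpha/(1-r)$ is precisely the Appendix's Lemma \ref{LA} and Corollary \ref{CA}. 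Your two lines $e^\mu=e^\beta(1+e^\beta)$ and $e^\mu=1+\cosh\beta+\cosh 2\beta$ are the paper's, and your remark that a boundary point honestly yields only a non-strict inequality (since $yB_n<1$ can hold for every $n$ when $y$ equals the larger fixed point) is well taken — the paper's Corollary \ref{CA} states the strict ``only if'' somewhat casually, and your proposed repair via monotonicity of the finite-$N$ critical values is, as you suspect, not yet an argument.

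The sufficiency half, however, has a genuine gap, and it is the half that carries the theorem's main new content. You bound the \emph{horizontal} (intra-slice) edges crudely by $e^\beta$ and keep the \emph{vertical} interactions, summing layer by layer to get $\prod_v 2\cosh(\beta d_v)$ over down-degrees; summing over triangulations then produces a transfer kernel whose entries are coefficients of $\bigl(2C(x)\bigr)^{k_{i+1}}$ with $C(x)=\sum_{d\ge1}\cosh(\beta d)x^d$. You then assert, without derivation, that the convergence threshold of this operator is $e^\mu=e^\beta(1+2\cosh\beta)$, and you yourself flag its spectral analysis as the main unresolved obstacle; a single check at $\beta=0$ cannot pin down a curve, and the Appendix does not do this work for you — it analyzes only the binomial series $W_{n,l}(x)$, and no series $C(x)$ appears anywhere in the paper. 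The paper's proof avoids your hard operator entirely by making the \emph{opposite} split: bound each of the $k_i+k_{i+1}$ vertical edges crudely by $e^\beta$, and sum the horizontal slice spins exactly, using the 1D periodic Ising partition function $(2\sinh\beta)^{k}+(2\cosh\beta)^{k}\le 2(2\cosh\beta)^{k}$. This collapses everything back to the same elementary object you already control, $Z_N\le 2^{N+1}W_N(x,y)$ with $x=2e^{2\beta-2\mu}\cosh\beta$ and $y=2e^{\beta-\mu}\cosh\beta$, and the one-parameter fixed-point analysis then gives exactly $\mu>\beta+\log(1+2\cosh\beta)$ (the side condition $x\le 1/4$ is automatic there, since $x<2\cosh\beta/(1+2\cosh\beta)^2\le 2/9$). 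So the missing idea is precisely this swap of which direction is treated crudely; as written, your upper bound is a plausible program with its decisive analytic step unexecuted, not a proof.
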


The theorem is proved in Sec. \ref{sec:proof1}. In Fig. \ref{fig:Z_bounds} the above bounds are shown. We note that the bounds coincide at $\beta = 0$ and $\beta \to \infty$. In particular, as a direct consequence of the above theorem, we have the following corollary, which reproduces the known result \cite{MYZ01} for causal triangulations without spins.

\begin{center}
\begin{figure}%
\includegraphics[scale=1.05]{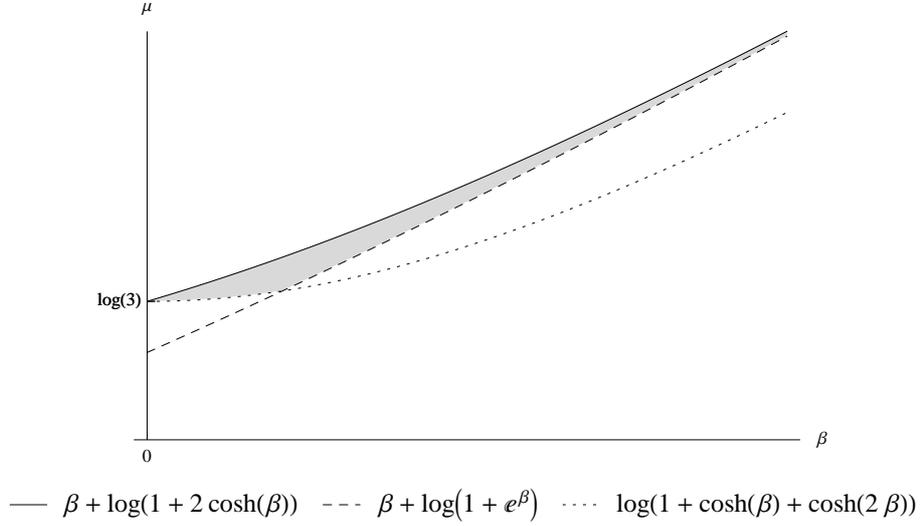}
\caption{Bounds defined in Thm. \ref{thm:cr_bounds}. The critical line is located within the shaded region.}%
\label{fig:Z_bounds}%
\end{figure}
\end{center}

\begin{corollary}
At $\beta = 0$ the partition function is finite for all $N$, if and only if $\mu > \log 3$.
\end{corollary}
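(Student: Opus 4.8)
The plan is to obtain the corollary as the $\beta = 0$ specialization of Theorem~\ref{thm:cr_bounds}, for which it suffices to check that the sufficient and the necessary bounds provided there collapse to the same value. First I would treat the forward (``if'') direction: setting $\beta = 0$ in the region $\Delta_f$ of (\ref{eq:sub_reg}) and using $\cosh 0 = 1$, the defining inequality $\mu > \beta + \log(1 + 2\cosh\beta)$ becomes $\mu > \log 3$. Hence $(0,\mu) \in \Delta_f$ whenever $\mu > \log 3$, and the theorem guarantees $Z_N(0,\mu) < \infty$ for every $N$.

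For the converse (``only if'') direction I would specialize the necessary bound (\ref{eq:low_bound}). At $\beta = 0$ the first term is $\log(1 + \cosh 0 + \cosh 0) = \log 3$ and the second is $0 + \log(1 + 1) = \log 2$, so their maximum is $\log 3$. Thus, by the theorem, if $Z_N(0,\mu)$ is finite for all $N$ then necessarily $\mu > \log 3$. Combining the two implications yields the stated equivalence.

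The corollary therefore carries essentially no independent content: the entire substance sits in Theorem~\ref{thm:cr_bounds}, and the only thing to verify is the elementary arithmetic that both the upper and the lower bound equal $\log 3$ at $\beta = 0$ --- precisely the pinching of the two curves at the left endpoint visible in Fig.~\ref{fig:Z_bounds}. Consequently I expect no genuine obstacle. As an optional consistency check I would note that at $\beta = 0$ the spin sum decouples and gives $Z_N(0,\mu) = \sum_{T \in \mathcal{T}_N} 2^{|V(T)|}e^{-\mu|F(T)|}$; inserting the identities (\ref{vs}) and (\ref{tri}) turns this into nested geometric series in the slice sizes, governed by the per-vertex weights $2e^{-2\mu}$ on the interior slices and $2e^{-\mu}$ on the boundary slice, whose convergence for all $N$ is controlled by the threshold $\mu = \log 3$, recovering the pure causal-triangulation result of \cite{MYZ01}.
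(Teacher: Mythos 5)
Your proof is correct and takes essentially the same route as the paper, which states the corollary as a direct consequence of Theorem~\ref{thm:cr_bounds}: one simply evaluates the sufficient bound $\mu > \beta + \log(1+2\cosh\beta)$ and the necessary bound $\max\{\log(1+\cosh\beta+\cosh 2\beta),\, \beta+\log(1+e^\beta)\}$ at $\beta=0$ and observes both equal $\log 3$. Your optional direct check also agrees with the paper, which later records the identity $Z_N(0,\mu)=2\,W_N(2e^{-2\mu},2e^{-\mu})$, whose finiteness for all $N$ is governed by exactly the threshold $\mu=\log 3$ via Corollary~\ref{CA}.
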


Note that the difference between the critical value given in the above Corollary and the value found in \cite{MYZ01} is simply due to the summation over spins, which is trivial for $\beta=0$.

%In light of the results shown in \cite{MYZ01}, it is natural to conjecture that, for each $\beta \geq 0$, there exists a corresponding value $\mu = \mu_{cr}(\beta)$, such that for all $\mu > \mu_{cr}(\beta)$ the partition function is finite and the probability measure defined in eq. (\ref{Gf}) converges weakly to some probability measure on the set of infinite spin graphs. 

Define now the \textit{critical line}
\begin{equation}
	\mu_{cr}(\beta) = \sup_{N \in \mathbb{N}} \mu_{cr}^N(\beta).
\label{mucr}
\end{equation}
From Thm. \ref{thm:cr_bounds} we have that  
\begin{equation}
		\max\{\log(1+ \cosh \beta + \cosh(2\beta)), \beta + \log(1+e^\beta)\} \leq \mu_{cr}(\beta) \leq \beta + \log(1+ 2 \cosh\beta).
\end{equation}
It follows from the definition of $\mu_{cr}(\beta)$ that
$Z_N(\beta,\mu)$ is finite for all $N \in \mathbb{N}$ if $\mu>\mu_{cr}(\beta)$, but if $\mu<\mu_{cr}(\beta)$ then at least for some $N$ the partition function is infinite. 

Furthermore, Thm. \ref{thm:cr_bounds} gives us the following information on the critical parameters for $\beta>0$.

\begin{corollary}
The critical line $\mu_{cr}(\beta)$ is continuous at $\beta = 0$, in particular
\begin{equation}
	\lim_{\beta \to 0} \mu_{cr}(\beta) = \log 3.
%\label{eq:}
\end{equation}
\end{corollary}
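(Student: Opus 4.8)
The plan is to use the two-sided bound on $\mu_{cr}(\beta)$ displayed immediately before the statement,
\[
L(\beta) \leq \mu_{cr}(\beta) \leq U(\beta),
\]
with $L(\beta) = \max\{\log(1+\cosh\beta+\cosh(2\beta)),\, \beta+\log(1+e^\beta)\}$ and $U(\beta)=\beta+\log(1+2\cosh\beta)$, and to conclude by a squeeze as $\beta\to 0^+$ (the domain being $\beta\geq 0$, the relevant limit is one-sided).

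First I would observe that $L$ and $U$ are continuous on $[0,\infty)$: each is assembled from $\cosh$, $\exp$, $\log$, sums, and the maximum of two continuous functions, all of which preserve continuity, so in particular both are continuous at $0$. Then I would evaluate the endpoints. Using $\cosh 0 = 1$, one finds $\log(1+\cosh 0+\cosh 0)=\log 3$ and $0+\log(1+e^0)=\log 2$, so that $L(0)=\max\{\log 3,\log 2\}=\log 3$, while $U(0)=0+\log(1+2\cosh 0)=\log 3$.

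Finally, by continuity we get $\lim_{\beta\to 0^+}L(\beta)=L(0)=\log 3$ and $\lim_{\beta\to 0^+}U(\beta)=U(0)=\log 3$, and the squeeze theorem applied to the sandwich yields $\lim_{\beta\to 0^+}\mu_{cr}(\beta)=\log 3$. Since the preceding corollary identifies $\mu_{cr}(0)=\log 3$, this is precisely the asserted continuity at $\beta=0$. The only genuine point to check is that the lower and upper bounds agree at the endpoint (both equal $\log 3$); there is no substantive obstacle here, as the whole argument reduces to this endpoint coincidence together with the elementary continuity of the bounding functions, both of which are already furnished by Theorem \ref{thm:cr_bounds}.
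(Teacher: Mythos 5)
Your proof is correct and matches the paper's own reasoning: the corollary is presented there as a direct consequence of the two-sided bound from Theorem \ref{thm:cr_bounds}, whose lower and upper bounding functions are continuous and both equal $\log 3$ at $\beta=0$, so the squeeze argument you give is exactly the intended one. Your additional care about the one-sided limit and the identification $\mu_{cr}(0)=\log 3$ via the preceding corollary is consistent with the paper and introduces nothing new.
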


\begin{corollary}
When $\beta$ is large the critical line has the asymptotics
\begin{equation}
	\mu_{cr}(\beta) = \beta + \log(1+e^\beta) + O(e^{-2\beta}) \quad \text{as } \beta \to \infty.
%\label{eq:}
\end{equation}
\end{corollary}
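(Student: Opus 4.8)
The plan is to read off the asymptotics directly from the two-sided bound on $\mu_{cr}(\beta)$ supplied by Theorem \ref{thm:cr_bounds}, namely
\begin{equation}
\max\{\log(1+ \cosh \beta + \cosh(2\beta)),\ \beta + \log(1+e^\beta)\} \leq \mu_{cr}(\beta) \leq \beta + \log(1+ 2 \cosh\beta),
\end{equation}
and to show that both the upper bound and the dominant branch of the lower bound agree with $\beta + \log(1+e^\beta)$ up to an error of order $e^{-2\beta}$. Since Theorem \ref{thm:cr_bounds} is already available, no further probabilistic input is needed: the whole corollary reduces to two elementary expansions of $\log(1+x)$.

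First I would expand the upper bound. Writing $2\cosh\beta = e^\beta + e^{-\beta}$ and factoring $1+2\cosh\beta = (1+e^\beta)\bigl(1 + \tfrac{e^{-\beta}}{1+e^\beta}\bigr)$, one gets
\begin{equation}
\beta + \log(1+2\cosh\beta) = \beta + \log(1+e^\beta) + \log\!\Bigl(1 + \tfrac{e^{-\beta}}{1+e^\beta}\Bigr).
\end{equation}
Since $\tfrac{e^{-\beta}}{1+e^\beta} = e^{-2\beta} + O(e^{-3\beta})$, the trailing logarithm is $e^{-2\beta} + O(e^{-3\beta}) = O(e^{-2\beta})$, so the upper bound equals $\beta+\log(1+e^\beta)+O(e^{-2\beta})$.

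Next I would determine which term realizes the maximum in the lower bound for large $\beta$. Factoring $e^{2\beta}/2$ out of $1+\cosh\beta+\cosh(2\beta)$ gives $\log(1+\cosh\beta+\cosh(2\beta)) = 2\beta - \log 2 + e^{-\beta} + O(e^{-2\beta})$, whereas $\beta+\log(1+e^\beta) = 2\beta + e^{-\beta} + O(e^{-2\beta})$. Their difference is $\log 2 + O(e^{-2\beta})$, which is strictly positive for all $\beta$ large enough, so the maximum is attained by the branch $\beta + \log(1+e^\beta)$; consequently the lower bound coincides exactly with $\beta+\log(1+e^\beta)$ in the large-$\beta$ regime.

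Combining the two computations yields $\beta+\log(1+e^\beta) \le \mu_{cr}(\beta) \le \beta+\log(1+e^\beta) + O(e^{-2\beta})$, which is precisely the claimed asymptotics. The only mildly delicate point is verifying that the subleading $e^{-\beta}$ contributions appearing in both branches of the lower bound cancel in the comparison, so that the sign is decided by the constant $\log 2$ rather than by a vanishing term; everything else is a routine Taylor expansion, and I do not anticipate a genuine obstacle here.
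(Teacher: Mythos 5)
Your proposal is correct and is exactly the argument the paper intends: the corollary is stated as a direct consequence of Theorem \ref{thm:cr_bounds}, with the proof left implicit, and your two Taylor expansions (showing the upper bound is $\beta+\log(1+e^\beta)+O(e^{-2\beta})$ and that the branch $\beta+\log(1+e^\beta)$ wins the maximum in the lower bound by $\log 2 + O(e^{-2\beta})$ after the $e^{-\beta}$ terms cancel) fill in precisely the omitted computation. Nothing is missing.
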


A quantity that would provide us information on the influence of the boundary conditions on the magnetization properties of the spin system is given by the \textit{mean magnetization} of the central spin $\sigma_0=\sigma_{v_0}$ (i.e. the spin attached to the root vertex $v_0$). In view of  Definition \ref{Dbc}, this is given by
\begin{equation}
	\begin{split}
		\langle \sigma_0 \rangle_{N,l,\beta,\mu}^{\tilde{\sigma}} & := 
\sum_{ \substack{(T, \sigma) \in \Lambda_{N,l}^{\tilde{\sigma}}}} \sigma_0 \ p_{N,l,\beta, \mu}^{\tilde{\sigma}} (T, \sigma) \\
		& =\sum_{ \substack{(T, \sigma) \in \Lambda_{N,l}^{\tilde{\sigma}}\\ \sigma_0 = +1}} p_{N,l,\beta, \mu}^{\tilde{\sigma}} (T, \sigma) - \sum_{ \substack{(T, \sigma) \in \Lambda_{N,l}^{\tilde{\sigma}}\\ \sigma_0 = -1}} p_{N,l,\beta, \mu}^{\tilde{\sigma}} (T, \sigma).
	\end{split}
\label{eq:s0p_expect}
\end{equation}

Notice that due to the symmetry in the model without fixed boundary 
we have for all $N$ and $l$
\begin{equation}
\label{UB}
\langle \sigma_0 \rangle_{N,l,\beta,\mu}= 0.
\end{equation}

The following result shows that (depending on the parameters) in the limit $N \to \infty$ the magnetization of the central spin is unaffected by the remote boundary conditions.

\begin{theorem}\label{thm:mag}
For $\beta$ small enough and $\mu > 3/2 \log (\cosh \beta) + 3 \log 2 $, the mean magnetization of the central spin of spin-graphs with $(l,-)$- as well as with $(l,+)$-boundary conditions converges to 0 as $N$ goes to infinity, that is
\begin{equation}
	\lim_{N \to \infty}\langle \sigma_0 \rangle_{N,l, \beta,\mu}^+ = 0=\lim_{N \to \infty}\langle \sigma_0 \rangle_{N,l, \beta,\mu}^- .
%\label{eq:}
\end{equation}
\end{theorem}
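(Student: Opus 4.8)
The plan is to reduce to a single boundary condition by symmetry, to rewrite the finite-volume central magnetization through a high-temperature (random-path) expansion, and then to control the resulting \emph{annealed} path sum by a transfer-matrix argument over the slices. First I would use the spin-flip symmetry: the global involution $\sigma\mapsto-\sigma$ preserves both $H(T,\sigma)$ and $|F(T)|$, hence the Gibbs weights, and it maps $\Lambda_{N,l}^{+}$ bijectively onto $\Lambda_{N,l}^{-}$ while sending $\sigma_0\mapsto-\sigma_0$. Consequently $\langle\sigma_0\rangle^{-}_{N,l,\beta,\mu}=-\langle\sigma_0\rangle^{+}_{N,l,\beta,\mu}$ for all $N,l$, so it suffices to treat the $+$ boundary. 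I will also see from the expansion below that $\langle\sigma_0\rangle^{+}_{N,l,\beta,\mu}\ge 0$, so producing an upper bound that vanishes as $N\to\infty$ settles both limits at once.

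The key step is a high-temperature expansion on a \emph{fixed} $T\in\mathcal{T}_{N,l}$ with all boundary spins set to $+1$. Writing $e^{\beta\sigma_u\sigma_v}=\cosh\beta\,(1+t\,\sigma_u\sigma_v)$ with $t=\tanh\beta$, expanding the product over $E(T)$, and summing over the interior spins (those not on $S_N$) gives, with a common prefactor,
\[
Z^{+}_T=(\cosh\beta)^{|E(T)|}\,2^{|V(T)|-l}\sum_{E'\in\mathcal{E}(T)}t^{|E'|},\qquad
M^{+}_T=(\cosh\beta)^{|E(T)|}\,2^{|V(T)|-l}\sum_{E'\in\mathcal{O}(T)}t^{|E'|},
\]
where $M^{+}_T$ is the unnormalized $\sigma_0$-sum, $\mathcal{E}(T)$ is the family of edge subsets with even degree at every interior vertex, and $\mathcal{O}(T)$ the family with odd degree at $v_0$ and even degree at every other interior vertex. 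A parity count shows that every $E'\in\mathcal{O}(T)$ contains a self-avoiding path $\gamma$ from $v_0$ to $S_N$; deleting such a $\gamma$ maps $\mathcal{O}(T)$ into (paths)$\times\mathcal{E}(T)$, and after over-counting one obtains the elementary bound $\langle\sigma_0\rangle^{+}_T=\big(\sum_{\mathcal{O}}t^{|E'|}\big)\big/\big(\sum_{\mathcal{E}}t^{|E'|}\big)\le\sum_{\gamma:\,v_0\to S_N}t^{|\gamma|}$, the sum being over self-avoiding paths in $T$, each of length at least $N$. Since $\langle\sigma_0\rangle^{+}_{N,l,\beta,\mu}$ is the average of the nonnegative numbers $\langle\sigma_0\rangle^{+}_T$ under the annealed measure $p^{+}_{N,l,\beta,\mu}$ marginalized onto triangulations, this yields
\[
0\le\langle\sigma_0\rangle^{+}_{N,l,\beta,\mu}\le \mathbb{E}^{+}_{N,l,\beta,\mu}\Big[\sum_{\gamma:\,v_0\to S_N}t^{|\gamma|}\Big],
\]
where $\mathbb{E}^{+}_{N,l,\beta,\mu}$ denotes expectation with respect to that marginal law.

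It remains to estimate the annealed path sum, which I would write as $\sum_{\gamma}t^{|\gamma|}\,\mathbb{P}^{+}_{N,l,\beta,\mu}[\gamma\subseteq T]$ and evaluate slice by slice. Using (\ref{vs})--(\ref{edges}), the geometric weight factorizes so that an interior slice of size $k_i=|S_i|$ carries a factor $x^{k_i}$ with $x=2\,e^{-2\mu}(\cosh\beta)^{3}$, while the number of triangulations with prescribed slice sizes is the product of binomials in (\ref{eq:Tnk_card}); the hypothesis $\mu>\tfrac32\log\cosh\beta+3\log 2$ forces $x<2^{-5}$, so both $Z^{+}_{N,l}$ and the path-decorated numerator are convergent transfer-matrix sums (the required analytic control of the binomial series $\sum_{k}\binom{k+k'-1}{k-1}x^{k}$ is exactly the material flagged for the Appendix). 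Inserting the path acts as a \emph{defect line} threading one marked vertex through each slice: per slice it contributes at least one extra edge of weight $t$ together with a local number of continuation choices bounded by the slice sizes. The aim is to show that this defect multiplies the leading slice-to-slice growth by a factor $\rho=\rho(\beta,\mu)<1$, so that the ratio numerator$/Z^{+}_{N,l}$ is $O(\rho^{N})\to0$; the smallness of $t$ (for $\beta$ small) and the $2^{-5}$ slack in $x$ are precisely what is consumed in making $\rho<1$.

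The main obstacle is this last contraction estimate. The path and the geometry are genuinely entangled, and both the number of admissible triangulations and the number of candidate paths grow with the \emph{unbounded} slice sizes, so a naive replacement of $Z^{+}_T$ by a crude edge-weight bound inflates the numerator by a factor that overwhelms the $t^{N}$ gained from the path; one must keep the even-subgraph weights $\sum_{\mathcal{E}(T)}t^{|E'|}$ intact and analyse the true transfer operator whose kernel couples consecutive slice sizes through the binomial in (\ref{eq:Tnk_card}) weighted by $x^{k_i}$ and by the path-continuation factor. Showing that adjoining a single defect line to this operator lowers its effective per-slice multiplier below $1$---uniformly in the slice sizes---is the crux, and it is where the restriction to small $\beta$ together with a sufficiently large $\mu$ becomes essential.
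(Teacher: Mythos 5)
Your symmetry reduction, the high-temperature expansion with the families $\mathcal{E}(T)$ and $\mathcal{O}(T)$, the observation that the annealed magnetization is the average of the nonnegative quenched magnetizations under the annealed marginal on triangulations, and the parity argument forcing any odd configuration to contain a connection from $v_0$ to $S_N$ (hence at least $N$ edges) all match the paper's setup and are correct. But the proof is incomplete exactly where you say it is: the final ``contraction estimate'' for the annealed path sum $\sum_{\gamma} t^{|\gamma|}\, \mathbb{P}^{+}_{N,l,\beta,\mu}[\gamma\subseteq T]$ is described, not proved. Along your route it is genuinely delicate, since the number of per-slice continuations of the defect line grows with the unbounded slice sizes and must be beaten by the annealed geometric weights; nothing in your sketch establishes the claimed $O(\rho^N)$ decay, and flagging this step as ``the crux'' and ``the main obstacle'' leaves the theorem unproved.

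Moreover, your diagnosis that ``a naive replacement of $Z^{+}_T$ by a crude edge-weight bound inflates the numerator by a factor that overwhelms the $t^{N}$'' is mistaken: that crude bound is precisely how the paper closes the argument. After extracting $(\tanh\beta)^N$ from $|A|\geq N$, the paper bounds the cardinality $|\mathcal{E}_0^+(T)|\leq 2^{|E(T)|-l}$; by (\ref{edges}) this equals $2^{l}\prod_{i=1}^{N-1}8^{k_i}$, so the inflation is absorbed slice by slice into the geometric weight, replacing $x=2e^{-2\mu}(\cosh\beta)^3$ by $8x=16e^{-2\mu}(\cosh\beta)^3$, which remains below $1/4$ exactly under the hypothesis $\mu>\tfrac{3}{2}\log(\cosh\beta)+3\log 2$. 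Lower-bounding $Z^+_{N,l}$ by the empty-set term of the expansion then gives
\begin{equation}
	\langle \sigma_0 \rangle_{N,l,\beta,\mu}^+ \leq 2^{l-1}\,(\tanh\beta)^N\, \frac{W_{N,l}\bigl(16 e^{-2\mu}(\cosh\beta)^3\bigr)}{W_{N,l}\bigl(2 e^{-2\mu}(\cosh\beta)^3\bigr)},
\end{equation}
and the explicit asymptotics (\ref{eq:W_asym}) show this ratio grows at a bounded geometric rate, uniformly in the slice sizes (the base is at most $8\,c_+(x)^2/c_+(8x)^2\leq 32$), so $(\tanh\beta)^N$ wins once $\beta$ is small enough --- which is exactly where the theorem's ``$\beta$ small enough'' hypothesis is consumed. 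In short, one need not keep the even-subgraph weights intact nor analyze a defect transfer operator: the slack in $\mu$ makes the crude count per-slice summable, and the smallness of $\tanh\beta$ defeats the resulting bounded inflation rate. Your approach could perhaps be completed, but as written the decisive estimate is missing, and the one concrete claim you make about why the simpler route fails is refuted by the paper's own proof.
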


Observe however, that unlike in (\ref{UB}) here for any finite $N$
we have the following. 

\begin{theorem} \label{TC}
For any $(\beta,\mu )\in \Delta_{f} $ defined in Theorem \ref{thm:cr_bounds},   and any finite $N$
one has
\begin{equation}
	\langle \sigma_0 \rangle_{N,l, \beta,\mu}^- < 0 < \ \ \langle \sigma_0 \rangle_{N,l, \beta,\mu}^+.
%\label{eq:}
\end{equation}
\end{theorem}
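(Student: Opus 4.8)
The plan is to reduce the two-sided statement to a single positivity fact about the Ising model on a fixed triangulation, and then to invoke a ferromagnetic correlation inequality. First I would exploit the global spin-flip symmetry. The map $F(T,\sigma)=(T,-\sigma)$ sends $\Lambda_{N,l}^+$ bijectively onto $\Lambda_{N,l}^-$, leaves $|F(T)|$ unchanged and, since $H(T,-\sigma)=-\sum_{(u,v)\in E(T)}(-\sigma_u)(-\sigma_v)=H(T,\sigma)$, preserves the Gibbs weight $e^{-\beta H-\mu|F(T)|}$; in particular $Z_{N,l}^-=Z_{N,l}^+$. As $F$ reverses the central spin, it follows that $\langle\sigma_0\rangle_{N,l,\beta,\mu}^-=-\langle\sigma_0\rangle_{N,l,\beta,\mu}^+$, so the claim reduces to the single inequality $\langle\sigma_0\rangle_{N,l,\beta,\mu}^+>0$.

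Next I would disentangle the average over geometries from the average over spins. Writing $Z_T^+=\sum_{\sigma\in\Omega^+(T)}e^{-\beta H(T,\sigma)}$ and $\langle\sigma_0\rangle_T^+=(Z_T^+)^{-1}\sum_{\sigma\in\Omega^+(T)}\sigma_0\,e^{-\beta H(T,\sigma)}$ for the fixed-triangulation expectation, one has
\begin{equation}
\langle\sigma_0\rangle_{N,l,\beta,\mu}^+=\frac{\sum_{T\in\mathcal{T}_{N,l}}e^{-\mu|F(T)|}\,Z_T^+\,\langle\sigma_0\rangle_T^+}{\sum_{T\in\mathcal{T}_{N,l}}e^{-\mu|F(T)|}\,Z_T^+}.
\end{equation}
Since $(\beta,\mu)\in\Delta_f$, Theorem \ref{thm:cr_bounds} guarantees $Z_{N,l}^+\le Z_N(\beta,\mu)<\infty$, so both series converge and every weight $e^{-\mu|F(T)|}Z_T^+$ is strictly positive. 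Hence it suffices to show $\langle\sigma_0\rangle_T^+>0$ for each $T\in\mathcal{T}_{N,l}$, for then the numerator is a convergent sum of strictly positive terms.

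It remains to treat the Ising model with $+$ boundary on a single finite connected graph $T$. Fixing $\sigma_v=+1$ on $S_N$ and collecting the terms of $-\beta H$ coming from edges incident to the boundary turns each such edge $(u,v)$, $v\in S_N$, into a contribution $\beta\sigma_u$; the interior spins thus obey a ferromagnetic Ising Hamiltonian with coupling $\beta\ge 0$ on interior edges and non-negative external fields $h_u=\beta\cdot\#\{v\in S_N:v\sim u\}\ge 0$. Griffiths' (GKS) inequalities then give $\langle\sigma_0\rangle_T^+\ge 0$ at once (equivalently, by the FKG inequality the $+$-measure dominates the free-boundary one, whose central magnetization is $0$ by symmetry). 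For the strict inequality I would use that $v_0$ lies at graph distance $0<N$ from $S_N$, so in the connected graph $T$ there is a path of edges, each with coupling $\beta$, joining $v_0$ to an interior vertex that carries a positive field; when $\beta>0$ this forces $\langle\sigma_0\rangle_T^+>0$.

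The main obstacle is precisely this last strictness step, since GKS and FKG yield only the non-strict bound. I expect to secure it either through the random-current representation—where $\langle\sigma_0\rangle_T^+$ equals the probability that $v_0$ is connected to the field (ghost) sites, which is positive because of the path above—or by a direct monotonicity argument that raises the relevant coupling or field away from $0$. It is worth stressing that $\beta>0$ is genuinely needed: at $\beta=0$ the spins decouple, $\langle\sigma_0\rangle_T^+\equiv 0$, and the inequality degenerates into an equality, so the substantive content of the theorem lies in the range $\beta>0$.
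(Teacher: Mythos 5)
Your proposal is correct, but it takes a genuinely different route from the paper. The paper works directly at the annealed level: it writes $\langle\sigma_0\rangle_{N,l,\beta,\mu}^+$ via the high-temperature (even-subgraph) expansion, i.e.\ as a sum over triangulations $T$ and over edge sets $A\in\mathcal{E}_0^+(T)$ with odd incidence at $v_0$ and even incidence at all other interior vertices, each weighted by $(\tanh\beta)^{|A|}\geq 0$; since every $T\in\mathcal{T}_{N,l}$ contains a path $\gamma$ from $v_0$ to $S_N$ with $E(\gamma)\in\mathcal{E}_0^+(T)$ and $|\gamma|=N$, each inner sum is at least $(\tanh\beta)^N$, giving the explicit quantitative bound $\langle\sigma_0\rangle_{N,l,\beta,\mu}^+\geq (e^{\beta-\mu}\cosh\beta)^l\,(\tanh\beta)^N\,W_{N,l}(2e^{-2\mu}(\cosh\beta)^3)/Z_{N,l}^+>0$, and then concludes the $-$ case by the same spin-flip symmetry you use. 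You instead condition on the triangulation, writing the annealed magnetization as a convex combination $\sum_T e^{-\mu|F(T)|}Z_T^+\langle\sigma_0\rangle_T^+ / \sum_T e^{-\mu|F(T)|}Z_T^+$ (legitimate on $\Delta_f$ by finiteness of the partition function), and reduce to strict positivity of the quenched magnetization on each fixed finite $T$ via GKS/FKG plus a strictness argument (random currents, or GKS-II monotonicity reducing to a single chain from $v_0$ to a field-carrying vertex, where the magnetization is computable and positive). Both proofs ultimately rest on the same geometric fact — a path of length $N$ from the root to the boundary — but the paper's version is self-contained, needs no correlation inequalities, and produces the quantitative lower bound with the $W_{N,l}$ factor, whereas yours is more modular and would generalize to other ferromagnetic potentials without redoing the expansion. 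One point in your favor: your closing remark that $\beta>0$ is genuinely needed is accurate and sharper than the paper's own phrasing — the paper asserts strict positivity ``for any $(\beta,\mu)$ which satisfy'' the condition defining $\Delta_f$, which formally includes $\beta=0$, yet its own lower bound $(\tanh\beta)^N$ vanishes there and indeed $\langle\sigma_0\rangle^{\pm}=0$ at $\beta=0$; so the theorem must be read with $\beta>0$ implicit, exactly as you observe.
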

Notice that the statement of Theorem \ref{TC} is certainly expected but 
still non-trivial since the set of triangulations $\mathcal{T}_{N,l}$ is countable even for finite $N$.

\subsection{Towards constructing a measure on infinite triangulations.}
Let ${\mathcal T}_{\infty}$ be a set of {\it infinite} rooted
triangulations with a countable number of slices but with  
finite numbers of vertices on each slice $S_i$, $i\geq 1$. 
One aims  to construct a measure 
on the space of {\it infinite} spin-graphs 
\begin{equation}
	\Lambda_{\infty}=\{(T, \sigma) : T\in {\mathcal T}_{\infty}, \ \sigma \in \{-1,+1\}^{V(T)}\}.
%\label{eq:}
\end{equation}

For any $T\in {\mathcal T}_{\infty}$ and  $N \in \mathbb{N}$ 
let $T|_{{\mathcal T}_N}$ 
denote a subgraph of $T$ on the vertices at distance at most $N$ from 
the root. 
For any $N$ and $T\in {\mathcal T}_N$ 
define a cylinder set 
\begin{equation}
	{\mathcal C}_{\infty}(T):=\{T'\in {\mathcal T}_{\infty}: 
T'|_{ {\mathcal T}_N}=T \}.
%\label{eq:}
\end{equation}
To define a measure on the $\sigma$-algebra generated by the cylinder sets 
a usual way (for lattices, for example) 
is to construct first a Gibbs family of conditional distributions. 

Below we define Gibbs measure
on finite spin-graphs with boundary conditions.  
We will show (Lemma \ref{Mp})
that the family of conditional distributions constructed from the introduced above Gibbs family is consistent. This  gives a ground for the 
Dobrushin-Lanford-Ruelle construction of  Gibbs measure on ${\mathcal T}_{\infty}$ (see \cite{Malyshev01}).

\subsubsection{Gibbs family of conditional  distributions.}
\label{CD}

Let us  introduce a more general space of triangulations. 
For any $0\leq K\leq N$ and $k, n\geq 1$ let $\mathcal{T}_{K,N}(k,n)$
denote the set of all rooted triangulations defined as above, but
 whose vertices belong to the slices $S_K, \ldots, S_N$, where $|S_K|=k$, 
$S_N=n$. %In this notations  $\mathcal{T}_{N}=\cup_{n\geq n}\mathcal{T}_{0;N}(1,n)$. 
Let also $\mathcal{T}_{K,N}=\bigcup_{k,n}\mathcal{T}_{K,N}(k,n)$.

Given a graph $T \in \mathcal{T}_N$ define for any $0\leq K<N$ a subgraph 
of $T$, on the set of vertices which consists of the root and of all the vertices on the first $K$ slices; denote this subgraph by 
$T|_{\mathcal{T}_K}$. In other words, $T|_{\mathcal{T}_K}$ is the subgraph of $T$ spanned by the vertices at distance at most $K$ from  the root. 

For any graph $T$ and its subgraph $G$ define $T\setminus G$ to be a subgraph of $T$ on the vertices $V(T)\setminus V(G)$. Observe that with this definition we have
\begin{equation}
	V(G)\cup V(T \setminus G) =V(T),
%\label{eq:}
\end{equation}
however
\begin{equation}
	E(G)\cup E(T \setminus G)  \subset E(T),
%\label{eq:}
\end{equation}
since in the set on the left we do not have the edges which connect
 vertices of $G$ to the vertices of $T\setminus G$. 
Therefore, given a graph (a rooted triangulation) 
we can define uniquely  (with respect to the root and to 
the order of vertices on the slice) a subgraph, as well as  the 
complement subgraph. However, there is no
 a one-to-one correspondence here
 since we can join two rooted subgraphs in different ways (even with 
preserved order on the slice). This leads to a following definition
of a union of two graphs. 

\begin{definition}\label{DU}
For any $0\leq K< N$ a union of two graphs 
$T_K \in  {\mathcal T}_K$ and  $ {\widetilde T}\in 
{\mathcal T}_{K+1,N}
$ is a {\it subset of graphs} in $ {\mathcal T}_N$:
\begin{equation}
	T_K \cup {\widetilde T} 
:=\{ T\in {\mathcal T}_N: T|_{{\mathcal T}_K}=T_K, \ T\setminus T_K= {\widetilde T}\}.
%\label{eq:}
\end{equation}
\end{definition}

Definition \ref{DU} gives  us a natural representation of the set
$\Lambda_{N+1,k }^{\widetilde\sigma} $:
\begin{equation}\label{BU}
\Lambda_{N+1,k }^{\widetilde\sigma}   = \bigcup_{(T, \sigma)\in \Lambda_{N}}\{
(T',\sigma'): T'\in T\cup S_{N+1}, |S_{N+1}|=k, 
\sigma'=(\sigma, {\widetilde\sigma})\},
\end{equation}
where we write $\sigma'=(\sigma, {\widetilde\sigma})$ if $\sigma'(v)=\sigma(v), v \in V(T),$ and $
\sigma'(v)={\widetilde\sigma}(v), v \in V(S_{N+1})$.
The Gibbs distribution  (\ref{Gf_bc}) induces 
 the following 
 probability measure on $\Lambda_{N}$.

\begin{definition}\label{DBC} 
For any  $N\geq 0$ and  $k\geq 1$
define a slice 
\begin{equation}
	S_{N+1}=(v_{N+1, 1}, \ldots , v_{N+1, |k|}).
%\label{eq:}
\end{equation}
Then for any ${\widetilde \sigma}\in \{-1, +1\}^k$, the Gibbs distribution  (\ref{Gf})
defines a conditional probability on $\Lambda_{N}$:
\begin{equation}\label{GfBC1}
p_{N,\beta, \mu}
 \left\{ (T, \sigma)|(S_{N+1}, {\widetilde\sigma})  \right\}
= \frac{
\sum_{T' \in T\cup S_{N+1}}  e^{-\beta
H(T', (\sigma, {\widetilde\sigma}))-\mu F(T)}}{
Z_{N+1,k}^{\widetilde\sigma} (\beta, \mu)
 }, \ \ \  (T, \sigma)\in \Lambda_{N},
\end{equation}
which is called a  conditional Gibbs distribution 
 with  the  boundary condition $ {\widetilde\sigma}$.
\end{definition}
In view of the last definition, the probability measure defined in (\ref{Gf}) is also  called a Gibbs distribution with {\it free boundary conditions}. 

Observe that the probability in Definition \ref{DBC}  is defined on the entire $\Lambda_{N}$,
unlike the one defined in  (\ref{Gf_bc}), which is on  $\Lambda_{N,l}^{\widetilde\sigma} \subset \Lambda_{N}.$

We shall consider now a more general class of 
conditional  Gibbs distributions.
Since for any  $0\leq K< N$ any graph $T\in {\mathcal T}_N$
has a subgraph $T|_{{\mathcal T}_K}\in {\mathcal T}_K$, 
the Gibbs distribution 
(\ref{Gf})   on $\Lambda_N$ induces as well the conditional 
probability on 
$\Lambda_K$ as we define below.

\begin{definition}\label{Dc}
For any $0\leq K< N$
and  $ ({\widetilde T}, {\widetilde \sigma})\in 
\Lambda_{K+1,N}
$ define
\begin{equation}\label{J7}
     p_{N,\beta, \mu} 
\left\{ 
(T_K, \sigma_K) \mid
 ({\widetilde T},  {\widetilde \sigma})
\right\}
  := \frac{ \sum_{ T\in T_K \cup {\widetilde T} }\ 
p_{N,\beta, \mu} (T, (\sigma_K,  {\widetilde \sigma}))
}{
\sum_{  (T'_K, \sigma'_K ) \in {\Lambda}_K   }
\sum_{
T\in  T'_K \cup {\widetilde T} \ 
}
 p_{N,\beta, \mu} (T, (\sigma'_K,  {\widetilde \sigma}))
}, 
\end{equation}
for all $(T_K, \sigma_K)\in \Lambda_K$, which is a conditional distribution on $\Lambda_K$.
\end{definition}

In the following lemma we derive a Markov property of the last conditional distribution, by proving a simple relation between the conditional probability (\ref{J7}) and the Gibbs distribution given in Definition \ref{DBC}.
\begin{lemma}\label{Mp}
For any $0\leq K< N$
and  $ ({\widetilde T}, {\widetilde \sigma})\in 
\Lambda_{K+1,N}$
one has the following equalities
\begin{equation}\label{p1}
	\begin{split}
		p_{N,\beta, \mu} \left\{ (T_K, \sigma_K) \mid  ({\widetilde T},  {\widetilde \sigma}) \right\} & = \frac{ \sum_{ T\in T_K \cup {\widetilde S}_{K+1} } 
e^{-\beta H(T, (\sigma_K,  {\widetilde \sigma}_{K+1}))-\mu F(T)} }{Z_{ K+1,|{\widetilde S}_{K+1}|}^{\widetilde\sigma}(\beta, \mu)} \\
		& = p_{K,\beta, \mu} \left\{ \left(T_K, \sigma_K \right) \mid   \left( |{\widetilde S}_{K+1}|, {\widetilde \sigma}_{{K+1}} \right) \right\},
	\end{split}
\end{equation}
where $\left( {\widetilde S}_{K+1}, {\widetilde \sigma}_{{K+1}} \right)$ denotes the $(K+1)$-st slice with spins of the given spin-graph $({\widetilde T}, {\widetilde \sigma})$.
\end{lemma}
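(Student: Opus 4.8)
The plan is to exploit the layered (causal) structure of the triangulations: because every edge joins vertices lying on the same slice or on two consecutive slices, the only edges of a graph $T\in T_K\cup\widetilde T$ that straddle the cut between the first $K$ slices and the rest are those joining $S_K$ to $S_{K+1}$, and these are entirely contained in the height-$(K+1)$ subgraph $T|_{\mathcal{T}_{K+1}}$. This locality will let me factor the Gibbs weight $e^{-\beta H(T,\cdot)-\mu|F(T)|}$ into a part depending only on $(T|_{\mathcal{T}_{K+1}},\sigma_K,\widetilde\sigma_{K+1})$ and a part depending only on the fixed outer data $(\widetilde T,\widetilde\sigma)$.

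First I would decompose the edge set as $E(T)=E(T|_{\mathcal{T}_{K+1}})\sqcup\big(E(\widetilde T)\setminus E(S_{K+1})\big)$, where the first set consists of the edges inside $T_K$, the cross edges between $S_K$ and $S_{K+1}$, and the cycle $E(S_{K+1})$, while the second set consists of all edges strictly beyond slice $K+1$ and involves only vertices carrying the fixed spins $\widetilde\sigma$. By (\ref{H}) this gives $H(T,(\sigma_K,\widetilde\sigma))=H(T|_{\mathcal{T}_{K+1}},(\sigma_K,\widetilde\sigma_{K+1}))+H_{\mathrm{out}}(\widetilde T,\widetilde\sigma)$, with $H_{\mathrm{out}}$ independent of $T_K$, $\sigma_K$ and of the $S_K$--$S_{K+1}$ cross edges. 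Likewise, since by (\ref{tri}) the face count depends only on the slice sizes, all of which are fixed once $T_K$ and $\widetilde T$ are fixed, $|F(T)|=|F(T|_{\mathcal{T}_{K+1}})|+|F_{\mathrm{out}}|$ with $|F_{\mathrm{out}}|$ constant. Writing $T'=T|_{\mathcal{T}_{K+1}}$, the weight therefore factors as $e^{-\beta H(T,\cdot)-\mu|F(T)|}=C(\widetilde T,\widetilde\sigma)\,e^{-\beta H(T',(\sigma_K,\widetilde\sigma_{K+1}))-\mu|F(T')|}$, where $C(\widetilde T,\widetilde\sigma)$ is constant with respect to the summation variables.

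Next I would observe that, with $T_K$ and $\widetilde T$ held fixed, the map $T\mapsto T'=T|_{\mathcal{T}_{K+1}}$ is a bijection from $T_K\cup\widetilde T$ onto $T_K\cup\widetilde S_{K+1}$: by Definition \ref{DU} the only freedom in assembling $T$ from the fixed $T_K$ and $\widetilde T$ is the choice of cross edges between $S_K$ and $S_{K+1}$, and this choice is recorded precisely by $T'$. Substituting the factorization into (\ref{J7}), the prefactor $1/Z_N$ and the outer factor $C(\widetilde T,\widetilde\sigma)$ are common to numerator and denominator and cancel; applying the bijection to both sums reduces (\ref{J7}) to a ratio of sums of weights $e^{-\beta H(T',(\cdot,\widetilde\sigma_{K+1}))-\mu|F(T')|}$ over height-$(K+1)$ spin-graphs, whose numerator is $\sum_{T'\in T_K\cup\widetilde S_{K+1}}e^{-\beta H(T',(\sigma_K,\widetilde\sigma_{K+1}))-\mu|F(T')|}$, i.e. the first line of (\ref{p1}).

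It then remains to identify the denominator and match the last line of (\ref{p1}). Running the $(T'_K,\sigma'_K)$-sum over $\Lambda_K$ together with the cross-edge sum reconstructs each spin-graph of $\Lambda_{K+1,|\widetilde S_{K+1}|}^{\widetilde\sigma_{K+1}}$ exactly once, so by (\ref{Z_bc}) the denominator equals $Z_{K+1,|\widetilde S_{K+1}|}^{\widetilde\sigma}(\beta,\mu)$, completing the first equality; comparison with (\ref{GfBC1}), taken with $N$ replaced by $K$ and boundary slice $(\,|\widetilde S_{K+1}|,\widetilde\sigma_{K+1})$, yields the second equality at once. I expect the main obstacle to be the bookkeeping that makes the factorization clean --- checking that no edge or triangle is double-counted or omitted at the $S_K$--$S_{K+1}$ interface and that $C(\widetilde T,\widetilde\sigma)$ is genuinely independent of the summation variables, so that it cancels. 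Everything else is a direct application of the causal structure and the definitions; finiteness of the relevant partition functions, guaranteed for $(\beta,\mu)$ in the region of Theorem \ref{thm:cr_bounds}, ensures all ratios are well defined.
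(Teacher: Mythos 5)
Your proposal is correct and follows essentially the same route as the paper: you split the Gibbs weight into a part carried by the height-$(K+1)$ restriction (equivalently, the paper's decomposition of $H$ and $F$ into the $T_K$ part, the $\widetilde T$ part, and the $S_K$--$\widetilde S_{K+1}$ interaction, as in (\ref{J2})--(\ref{J3})), cancel the $(\widetilde T,\widetilde\sigma)$-dependent factor between numerator and denominator, and identify the denominator with $Z_{K+1,|\widetilde S_{K+1}|}^{\widetilde\sigma}$ before invoking Definition \ref{DBC}. Your explicit bijection $T\mapsto T|_{\mathcal{T}_{K+1}}$ from $T_K\cup\widetilde T$ onto $T_K\cup\widetilde S_{K+1}$ is exactly the step the paper leaves implicit in passing from (\ref{J1}) to (\ref{J9}).
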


The last equality follows  simply by  Definition \ref{DBC}, it underlines  
 that the conditional probability in  (\ref{p1})
depends only on the spins on the vertices of ${\widetilde T}$ which are 
connected to $T_K$, i.e., only those which interact with the boundary. This reflects the Markov property  of the conditional probabilities on the left. The first equality shows that the conditional distribution is again in the form of (\ref{Gf}), which confirms the Gibbs property of the conditional distribution. Observe, that the formula in  (\ref{p1}) does not depend on $N$ (as long as $K<N$).

\begin{remark}
\label{R1} 
Theorem \ref{thm:mag} and Theorem \ref{TC} hold as well if the Gibbs distribution with fixed boundary conditions is replaced by the conditional distribution (\ref{p1}).
\end{remark}

\section{ Proof of Theorem \ref{thm:cr_bounds}.}
\label{sec:proof1}

We shall study here the partition function defined in (\ref{Z}). 
Let us rewrite it 
using the space $\mathcal{T}_{N,\textbf{k}}$, $\mathbf{k} = (k_1,\dots,k_N)$, as
\begin{equation}
	\begin{split}
		Z_N(\beta,\mu) & = \sum_{l\geq 1} \sum_{ (T, \sigma) \in \Lambda_{N,l} } e^{-\beta H(T, \sigma)-\mu |F(T)|} \\
		& = \sum_{l\geq 1} \sum_{\textbf{k}:k_N=l} \sum_{ T \in  \mathcal{T}_{N,\textbf{k}}} e^{-\mu |F(T)|} \sum_{ \sigma \in \Omega(T)} e^{-\beta H(T, \sigma)},
	\end{split}
\label{Z1}
\end{equation}
where we decomposed the sum in eq. (\ref{Z}) according to the number $l$ of vertices on $S_{N}$.
\subsection{Upper bound}

First we observe that the Hamiltonian (\ref{H})  for any $T\in \mathcal{T}_{N,\textbf{k}}$ 
can be written by splitting the interaction between spins on different slices and spins on the same slice, that is
\begin{equation}
	\begin{split}
	H(T,\sigma) & = - \sigma_0 \sum_{i \in V(S_1)} \sigma_i - \sum_{\substack{i \in V(S_1),  j \in V(S_2) \\ (i,j) \in E(T)}} \sigma_i \sigma_j - \dots - \sum_{\substack{i \in V(S_{N-1}),  j \in V(S_N) \\ (i,j) \in E(T)}} \sigma_i \sigma_j \\
	& - \sum_{i = 1}^{k_1} \sigma_{1,i} \sigma_{1,i+1} -  \dots - \sum_{i = 1}^{k_N} \sigma_{N,i} \sigma_{N,i+1},
	\end{split}
\label{eq:Zup_Hsplit}
\end{equation}
where $\sigma_{i,k_i+1} = \sigma_{i,1}$, $i = 1,\dots,N$. Therefore, considering that for any $u,v \in V(T)$, $\sigma_u \sigma_v \in \{+1,-1\}$ and that the number of edges connecting the slice $S_i$ and $S_{i+1}$ is $k_i+k_{i+1}$, from eq. (\ref{eq:Zup_Hsplit}) we obtain for any $T\in \mathcal{T}_{N,\textbf{k}} $ with $k_N=l\geq 1$,
\begin{equation}
	H(T,\sigma) \geq  - 2 \sum_{i = 1}^{N-1} k_i -l+ \sum_{i=1}^N H_1(\sigma^i,k_i),
\label{I1}
\end{equation}
where $\sigma^i=(\sigma_v)_{v\in S_i}=(\sigma_{i,1}, \ldots, \sigma_{i,k_i})$, and 
\begin{equation}
	H_1(\sigma^i,k_i) = - \sum_{j = 1}^{k_i} \sigma_{i,j} \sigma_{i,j+1}
%\label{eq:}
\end{equation}
is the Hamiltonian of a 1-dimensional Ising model with $k_i$ spins and periodic boundary conditions. From eq. (\ref{Z1}), using the above inequality (\ref{I1}) we get
\begin{equation}
	\begin{split}
		Z_{N}(\beta,\mu)  & \leq \sum_{l \geq 1} e^{\beta l} \sum_{\textbf{k}:k_N=l} \sum_{ T \in  \mathcal{T}_{N,\textbf{k}}} e^{2\beta \sum_{i=1}^{N-1} k_i- \mu |F(T)|} \sum_{\sigma \in \Omega_N(T)} \prod_{j = 1}^N e^{ - \beta H_1(\sigma^i,k_i)} \\
		& = \sum_{l \geq 1}  e^{(\beta-\mu)  l} \sum_{\textbf{k} :k_N=l} e^{2(\beta-\mu) \sum_{i=1}^{N-1} k_i}\prod_{i=1}^{N-1} \binom{k_{i+1}+k_i-1}{k_i-1} \\
		& \times \prod_{j = 1}^N  \left( \sum_{\sigma \in \{-1, +1\}^{k_j} } e^{ - \beta H_1(\sigma,k_j)}\right)
		\end{split}
\label{I2}
\end{equation}
where the last equality is due to  equations (\ref{tri}) and (\ref{eq:Tnk_card}).
Using then the following well-known result for the partition function of a 1-dimensional Ising model (see e.g. \cite{Baxter}) 
\begin{equation}
	\sum_{\sigma \in \{-1, +1\}^{k_j}} e^{ - \beta H_1(\sigma,k_j)} = (2 \sinh\beta)^{k_j} + (2 \cosh\beta)^{k_j} \leq 2 (2 \cosh\beta)^{k_j},
%\label{eq:}
\end{equation}
we derive from  (\ref{I2}) 
\begin{equation}
	\begin{split}
		Z_{N}(\beta,\mu)  \leq & 2^{N+1} \sum_{l \geq 1} (2 e^{\beta-\mu} \cosh\beta)^l \\
		& \times \sum_{k_1, \dots, k_{N-1}} \prod_{i=1}^{N-1} \binom{k_{i+1}+k_i-1}{k_i-1} (2 e^{2 \beta -2 \mu}  \cosh\beta)^{k_i}.
	\end{split}
\label{I3}
\end{equation}

We shall use the following lemma (which is proved in \cite{MYZ01}, but we 
provide the proof   in Appendix \ref{apx:series} with some additional details that we use here).

\begin{lemma}(\cite{MYZ01}) \label{LA}
Define for $x\geq 0$ and $n,l \geq 1$ 
\begin{equation}
	W_{n+1,l}(x) = \sum_{k_1 = 1}^\infty \cdots \sum_{k_n = 1}^\infty \prod_{i=1}^{n} \binom{k_{i+1}+k_i-1}{k_i-1} x^{k_i},
\label{e1}
\end{equation}
with $k_{n+1}=l$. 
We have three cases:
\begin{enumerate}
	\item if $0<x < 1/4$ then $W_{n,l}(x)$ is finite for all $n,l \in \mathbb{N}$,
and in particular,
\begin{equation}
	W_{n+1,l}(x) \sim (1-4x) \left( \frac{2}{1+\sqrt{1-4x}} \right)^{l+3}  \left( \frac{2\sqrt{x}}{1+\sqrt{1-4x}} \right)^{2n}, \quad n \to \infty;
\label{eq:W_asym}
\end{equation}
	\item $W_{n+1,l}(1/4)$ is finite for all $n,l \in \mathbb{N}$,
and in particular,
		\begin{equation}
			\lim_{n \to \infty} W_{n+1,l}(1/4)  \sim 2^{l+1}, \quad n \to \infty;
			%\label{eq:}
		\end{equation}
	\item if
$\frac{1}{4} < x < \left[ 2 \cos\left( \frac{\pi}{n+2} \right)\right]^{-2} $, 
then $W_{n,l}$ is finite for all $l \in \mathbb{N}$.
\end{enumerate}
\end{lemma}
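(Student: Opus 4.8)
The plan is to evaluate the nested sum one variable at a time, turning it into a scalar recursion of Möbius type. The single identity I would lean on is the negative-binomial series
\begin{equation*}
\sum_{k=1}^{\infty}\binom{m+k-1}{k-1}u^{k}=\frac{u}{1-u}\,(1-u)^{-m},\qquad |u|<1,\ m\ge 1.
\end{equation*}
Applying it to the innermost sum over $k_{1}$ (with $m=k_{2}$, $u=x$) shows that the partial sum over the first $j$ variables has the closed form $P_{j}(m)=c_{j}\,\rho_{j}^{\,m}$, where $P_{0}\equiv1$ and $W_{n+1,l}(x)=P_{n}(l)$. Feeding $P_{j-1}(m)=c_{j-1}\rho_{j-1}^{\,m}$ into the next sum, now with effective weight $u=x\rho_{j-1}$, gives the coupled recursions
\begin{equation*}
\rho_{j}=\frac{1}{1-x\rho_{j-1}},\qquad c_{j}=c_{j-1}\,\frac{x\rho_{j-1}}{1-x\rho_{j-1}},\qquad \rho_{0}=c_{0}=1,
\end{equation*}
so that everything is governed by the orbit of the Möbius map $\phi(\rho)=1/(1-x\rho)$ started at $\rho_{0}=1$. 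The $j$-th summation is legitimate exactly when $x\rho_{j-1}<1$, which is the crux to keep track of.

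Next I would linearize the orbit. Writing $\rho_{j}=B_{j-1}/B_{j}$ converts the Möbius recursion into the \emph{linear} three-term recursion $B_{j}=B_{j-1}-x\,B_{j-2}$ with $B_{-1}=B_{0}=1$, and a short telescoping gives the clean closed form
\begin{equation*}
W_{n+1,l}(x)=c_{n}\rho_{n}^{\,l}=\frac{x^{n}\,B_{n-1}^{\,l-1}}{B_{n}^{\,l+1}}.
\end{equation*}
One checks directly that the convergence criterion $x\rho_{j-1}<1$ is equivalent to $B_{j}>0$, so finiteness of $W_{n+1,l}(x)$ reduces to positivity of $B_{1},\dots,B_{n}$, and the whole problem is controlled by the characteristic equation $t^{2}-t+x=0$ with roots $t_{\pm}=\tfrac12(1\pm\sqrt{1-4x})$. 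The three cases of the lemma are precisely the three sign regimes of the discriminant $1-4x$.

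For $0<x<1/4$ the roots are real and positive, hence $B_{j}>0$ for all $j$ (finiteness for every $n,l$) and $\rho_{j}\to 1/t_{+}=\rho_{*}:=2/(1+\sqrt{1-4x})$ geometrically, the rate at the attracting fixed point being $\phi'(\rho_{*})=x\rho_{*}^{2}=\big(2\sqrt{x}/(1+\sqrt{1-4x})\big)^{2}$. Substituting $B_{n}\sim A\,t_{+}^{\,n}$ into the closed form, with $A=t_{+}^{2}/\sqrt{1-4x}$ obtained from the initial data, yields exactly the asymptotic \eqref{eq:W_asym}, the factor $(1-4x)$ and the power $\rho_{*}^{\,l+3}$ coming out of $A^{-2}=(1-4x)\rho_{*}^{4}$. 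For $x=1/4$ the double root $t_{\pm}=1/2$ gives $B_{j}=(j+2)2^{-(j+1)}$, hence $\rho_{j}=2(j+1)/(j+2)\to 2$ and $c_{n}=2/((n+1)(n+2))$; so $W_{n+1,l}(1/4)$ is finite for all $n,l$ and $n^{2}W_{n+1,l}(1/4)\to 2^{\,l+1}$, the marginal (critical) statement. For $x>1/4$ the roots are complex of modulus $\sqrt{x}$, giving the Chebyshev-type solution $B_{j}=x^{j/2}\sin((j+2)\gamma)/\sin 2\gamma$ with $\cos\gamma=1/(2\sqrt{x})$ and $\gamma\in(0,\pi/2)$; positivity of $B_{1},\dots,B_{n}$ holds iff $(n+2)\gamma<\pi$, which rearranges exactly to $x<[2\cos(\pi/(n+2))]^{-2}$.

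The main obstacle I anticipate is not the qualitative trichotomy, which falls out cleanly from the characteristic roots, but the bookkeeping that ties convergence to positivity: each application of the negative-binomial identity is valid only while $x\rho_{j-1}<1$, so the inductive identity $P_{j}(m)=c_{j}\rho_{j}^{\,m}$ and the criterion $B_{j}>0$ must be established \emph{simultaneously} rather than assumed, and one must separately confirm absolute convergence so that the iterated sum equals the multiple sum. A secondary delicate point is pinning down the exact subcritical constants, which requires evaluating the coefficient $A$ from $B_{-1}=B_{0}=1$ and checking that it produces precisely the prefactor $(1-4x)\rho_{*}^{\,l+3}$ in \eqref{eq:W_asym}; and in the supercritical case one must fix the correct phase offset $(j+2)\gamma$ in $B_{j}$ so that the positivity threshold lands exactly on the stated bound.
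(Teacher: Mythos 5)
Your proposal is correct, and for the subcritical case it follows the paper's own route exactly: the same negative-binomial identity applied to the innermost sum, the same M\"obius recursion (the paper's $B_i=1/(1-xB_{i-1})$ is your $\rho_j$, with $B_1=1/(1-x)$ matching $\rho_1$), and the same closed form --- your telescoped expression $W_{n+1,l}(x)=x^{n}B_{n-1}^{l-1}/B_{n}^{l+1}$ with $B_j=B_{j-1}-xB_{j-2}$, $B_{-1}=B_0=1$, is algebraically identical to the paper's $x^{n}B_n^{l+1}\prod_{i=1}^{n-1}B_i^2$ and to its displayed formula $2^{l+3}(1-4x)(4x)^{n}(c_+^{n+1}-c_-^{n+1})^{l-1}/(c_+^{n+2}-c_-^{n+2})^{l+1}$ with $c_\pm=2t_\pm$ (I verified the constant $A=t_+^2/\sqrt{1-4x}$ and it reproduces the prefactor $(1-4x)\rho_*^{l+3}$ of \eqref{eq:W_asym}). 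Where you genuinely go beyond the paper: the appendix proves only case 1 and disposes of cases 2 and 3 with ``the rest follows directly by the results of \cite{MYZ01}'', whereas you actually prove them --- the double root at $x=1/4$ giving $B_j=(j+2)2^{-(j+1)}$ and hence $n^2W_{n+1,l}(1/4)\to 2^{l+1}$ (which is also the correct reading of the lemma's garbled ``$\lim\sim 2^{l+1}$'' in case 2, since the closed form gives $W_{n+1,l}(1/4)=2^{l+1}(n+1)^{l-1}/(n+2)^{l+1}\to 0$), and the complex-root/Chebyshev regime where positivity of $B_1,\dots,B_n$ is equivalent to $(n+2)\gamma<\pi$, i.e.\ $x<[2\cos(\pi/(n+2))]^{-2}$. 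Note your criterion governs $W_{n+1,l}$ and is thus slightly sharper than the lemma's case 3 as literally indexed (finiteness of $W_{n,l}$), but it implies it, since dropping one summation only enlarges the convergence region. You are also more careful than the paper on the one genuine subtlety --- that each negative-binomial summation requires $x\rho_{j-1}<1$, equivalent to $B_j>0$ given $B_{j-1}>0$, so the induction and positivity must run together (the interchange of sums is harmless by nonnegativity of all terms, i.e.\ Tonelli). In short: same key lemma and decomposition, but your linearized three-term recursion buys a cleaner closed form and self-contained proofs of the critical and supercritical cases that the paper outsources.
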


Notice that in the notations of \cite{MYZ01} function $W_{n+1,l}(x)$ is 
\begin{equation}
	e^{ \mu l}\sum_{k\geq 1}e^{-\mu k}Z_{[1,n+1]}(k,l)
%\label{eq:}
\end{equation}
with $\mu=-2\log{x}.$

Observe  that, in the third case the length of the interval of the values $x$ where $W_{n,l}(x)$ is finite shrinks to 0 for $n \to \infty$. Therefore $W_{n,l}(x)$
is finite for a fixed $x$ and all $n$ if and only if $x\leq 1/4$.

\begin{corollary}\label{CA} The function
\begin{equation}
	W_{n+1}(x,y) = \sum_{l\geq 1}y^lW_{n+1,l}(x),
%\label{eq:}
\end{equation} 
defined for positive $x$ and $y$, is finite for all $n$ if and only if 
\begin{equation}
	x\in (0, 1/4] \quad \text{ and } \quad y^2-y+x<0.
\label{eq:}
\end{equation}
\end{corollary}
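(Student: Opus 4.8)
The plan is to prove the two directions by separating the constraint on $x$ from the constraint on $y$, exploiting the product structure hidden in $W_{n+1}(x,y)=\sum_{l\ge1}y^l W_{n+1,l}(x)$. First I would dispose of the $x$-constraint. By Lemma \ref{LA} (the third case together with the remark following it), for fixed $x$ the quantity $W_{n,l}(x)$ is finite for every $n$ if and only if $x\le 1/4$; hence if $x>1/4$ then already some term $W_{n+1,l}(x)$ of the defining series is infinite, so $W_{n+1}(x,y)=\infty$ for some $n$. This forces $x\in(0,1/4]$ and reduces the problem, for such $x$, to deciding for which $y>0$ the series $\sum_{l\ge1}y^l W_{n+1,l}(x)$ converges for all $n$.

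For the $y$-constraint I would read off the growth of $W_{n+1,l}(x)$ in $l$. Writing $y_\pm=\frac{1\pm\sqrt{1-4x}}{2}$ for the two roots of $y^2-y+x$, the $l$-dependent factor in (\ref{eq:W_asym}) is $\bigl(\tfrac{2}{1+\sqrt{1-4x}}\bigr)^{l}=y_+^{-l}$, while at the borderline $x=1/4$ the factor $2^{l+1}$ of case 2 is again $y_+^{-l}$ with $y_+=1/2$. Thus the defining series is comparable to a geometric series of ratio $y\,y_+^{-1}$, convergent exactly when $y<y_+=\frac{1+\sqrt{1-4x}}{2}$. Rearranging this to $2y-1<\sqrt{1-4x}$ and squaring (an equivalence in the relevant regime $y\ge 1/2$, the inequality being automatic when $y<1/2$) yields $y^2-y+x<0$, which together with $x\in(0,1/4]$ is the asserted region; the two-sided nature of the asymptotics in Lemma \ref{LA} gives the matching lower bound, so the condition is also necessary.

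The delicate point, and the step I would spend the most care on, is that (\ref{eq:W_asym}) is an asymptotic as $n\to\infty$ with $l$ fixed, whereas the sum over $l$ is taken at $n$ fixed; the two limits do not interchange, so the geometric-series heuristic must be made rigorous. To control the large-$l$ behaviour at fixed $n$ I would instead use the exact recursion coming from the generating identity $\sum_{l\ge1}\binom{l+k-1}{k-1}y^l=(1-y)^{-k}-1$, namely
\[
W_{n+1}(x,y)=W_n\!\left(x,\tfrac{x}{1-y}\right)-W_n(x,x),\qquad W_1(x,y)=\tfrac{y}{1-y}.
\]
Here the map $T(y)=\frac{x}{1-y}$ is a M\"obius transformation whose fixed points are precisely $y_\pm$, with $y_-$ attracting and $y_+$ repelling, so finiteness of $W_{n+1}(x,y)$ for all $n$ is equivalent to the forward orbit $T^{\,j}(y)$ staying below $1$, i.e. to $y$ lying below the repelling threshold $y_+$. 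This route makes the cut-off $y_+$ rigorous and, crucially, handles the borderline $x=1/4$, where $T$ has a parabolic fixed point and the naive asymptotic degenerates; pinning down convergence at that boundary is where the real technical work lies.
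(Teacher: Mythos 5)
Your proposal is correct, but it takes a genuinely different route from the paper's, and the ``delicate point'' you spend the most care on is one the paper never encounters: the paper's proof of Corollary \ref{CA} does not use the asymptotics (\ref{eq:W_asym}) at all. It relies instead on the exact product formula (\ref{eq:W_sol}), $W_{n+1,l}(x)=x^nB_n(x)^{l+1}\prod_{i=1}^{n-1}B_i(x)^2$, already obtained in the proof of Lemma \ref{LA}: at fixed $n$ the $l$-dependence is \emph{exactly} geometric, so $W_{n+1}(x,y)<\infty$ iff $yB_n(x)<1$ (this is (\ref{I5})), and the monotone convergence $B_i(x)\nearrow 2/c_+(x)$, giving (\ref{eq:B_bound}), converts ``for all $n$'' into $2y/(1+\sqrt{1-4x})<1$. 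You sum in the opposite order: over the boundary index $l$ first, via $\sum_{l\geq1}\binom{l+k-1}{k-1}y^l=(1-y)^{-k}-1$, which yields your recursion $W_{n+1}(x,y)=W_n\left(x,\frac{x}{1-y}\right)-W_n(x,x)$; I checked this identity and it is correct (Tonelli, all terms nonnegative, and the subtracted term is finite since the orbit of $x$ increases to $y_-<1$, so no $\infty-\infty$ occurs; for $y\geq1$ divergence is immediate termwise). The two arguments are in fact conjugate: the paper's $B_i$ satisfy $xB_i=T^i(x)$ for your map $T(u)=x/(1-u)$, and unwinding $yB_n(x)<1$ step by step gives precisely $T^n(y)<1$. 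So the paper follows the orbit of $x$ in closed form, via (\ref{eq:rec_sol}), while you follow the orbit of $y$ qualitatively through the fixed points $y_\pm$, the roots of $y^2-y+x$. Your route avoids solving the recursion and treats the parabolic case $x=1/4$ uniformly — though that case is easier than you fear: there $B_i=2(i+1)/(i+2)<2$, and for $y>1/2$ the orbit increases by $T(y)-y=(y-\tfrac{1}{2})^2/(1-y)$ and escapes past $1$ in finitely many steps. The paper's route buys explicit formulas and rates, which it reuses elsewhere (e.g.\ in the proof of Theorem \ref{thm:mag}).

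One caveat, shared with the paper: carried out exactly, your orbit criterion gives finiteness for all $n$ iff $0<y\leq y_+$ — at $y=y_+$ the orbit is constant at $y_+<1$, and correspondingly $y_+B_n(x)<1$ strictly for every finite $n$ — and this region includes $0<y\leq y_-$, where $y^2-y+x\geq0$. So the literal ``only if'' of the corollary fails on these boundary and small-$y$ sets; your parenthetical about $y<1/2$ shows you noticed half of this but then reproduced the stated form anyway. The paper's own proof makes the identical elision (it passes from ``$yB_n(x)<1$ for all $n$'' to the strict inequality $2y/c_+(x)<1$ and declares this to be $y^2-y+x<0$), so your write-up is faithful to the intended statement; the clean version of the criterion is $x\in(0,1/4]$ together with $2y\leq1+\sqrt{1-4x}$.
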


Rewrite now inequality (\ref{I3}) as
\begin{equation}
	Z_{N}(\beta,\mu)  \leq 2^{N+1} W_{N}(x,y),
\label{I4}
\end{equation}  
where $y=2 e^{\beta-\mu} \cosh\beta$ and 
$x=2 e^{2 \beta -2 \mu}  \cosh\beta=e^{\beta-\mu}y$. Then it is easy to check that the conditions of Corollary \ref{CA} are satisfied if 
\[\mu > \beta + \log(1+ 2 \cosh\beta). \]
This proves the first part of Theorem \ref{thm:cr_bounds}.

\subsection{Lower bound}
\subsubsection{High-temperature region}
In the following, we rewrite the partition function in eq. (\ref{Z}) applying a classical high-temperature expansion argument to our case. 

First, we note that
\begin{equation}
e^{\beta \sigma_i \sigma_j}= (1 + \sigma_i \sigma_j \tanh \beta) \cosh \beta
.
%\label{eq:}
\end{equation}
Hence, using  (\ref{H}), for any $\sigma \in \Omega (T)$ and $T \in \mathcal{T}_{N,l}$ we get 
\begin{equation}\label{eq:hte}
\begin{split}
	e^{-\beta H(T,\sigma)} & = \prod_{(i,j) \in E(T)} e^{\beta \sigma_i \sigma_j} \\
	& =  (\cosh \beta)^{|E(T)|} \prod_{(i,j) \in E(T)} (1 + \sigma_i \sigma_j \tanh \beta) \\
	& =  (\cosh \beta)^{|E(T)|}  \sum_{E \subseteq E(T)} \prod_{(i,j) \in E} \sigma_i \sigma_j \tanh\beta,
\end{split}
\end{equation}
where the sum runs over all subsets $E$ of $E(T)$, 
including the empty set, which gives
contribution 1. For any vertex $i\in V(T)$ and a subset of edges
$E \subseteq E(T)$ let us define the incidence number
\begin{equation}
	I(i,E) = |\{ j \in V(T) : (i,j) \in E \}|.
%\label{eq:}
\end{equation}
With this notation we derive from  (\ref{eq:hte}) 
\begin{equation}\label{e2}
	e^{-\beta H(T,\sigma)} 	 =  (\cosh \beta)^{|E(T)|}  
\sum_{E \subseteq E(T)} (\tanh \beta)^{|E|} \prod_{i \in V(T)} 
\sigma_i^{I(i,E)}.
\end{equation}

Next, we note that
\begin{equation}
	\sum_{\sigma_i = \pm 1} \sigma_i^{I(i,E)} =
	\begin{cases}
	 2, & \text{if $I(i,E)$ is even,} \\
	 0, & \text{if $I(i,E)$ is odd.}
	\end{cases}
%\label{eq:}
\end{equation}
Therefore, 
\begin{equation}
	\sum_{\sigma \in \Omega(T)} \sum_{E \subseteq E(T) } (\tanh \beta)^{|E|} \prod_{i \in V(T)} \sigma_i^{I(i,E)} = 2^{|V(T)|} \sum_{E \in \mathcal{E}(T)} (\tanh \beta)^{|E|},
\label{e3}
\end{equation}
where
\begin{equation}
	\mathcal{E}(T) = \{ E \subseteq E(T)  : I(i,E) \, \text{is even} \, \forall i \in  V(T) \}.
\label{setE}
\end{equation}
Combining now (\ref{e3}) and  (\ref{eq:hte}) we get
\begin{equation}\label{e4}
	\sum_{\sigma \in \Omega(T)} 	e^{-\beta H(T,\sigma)} 	 =  (\cosh \beta)^{|E(T)|}   2^{|V(T)|}
\sum_{E \in \mathcal{E}(T)} (\tanh \beta)^{|E|}.
\end{equation}
Substituting the last formula into (\ref{Z1}) 
allows us to rewrite the partition function as
\begin{equation}
     Z_N(\beta,\mu) = 
\sum_{l\geq 1} \sum_{ T \in  \mathcal{T}_{N,l}} 
e^{-\mu |F(T)|} (\cosh \beta)^{|E(T)|}   2^{|V(T)|}
\sum_{E \in \mathcal{E}(T)} (\tanh \beta)^{|E|}.
\label{HT}
\end{equation}
Applying relations  (\ref{vs}),  (\ref{tri}) and  (\ref{edges}), 
we derive from here
\begin{equation}
	\begin{split}
	Z_{N}(\beta,\mu) & = 2\sum_{l\geq 1} (2e^{-\mu} (\cosh \beta)^2)^l\sum_{\textbf{k}:k_N=l} \sum_{ T \in  \mathcal{T}_{N,\textbf{k}}} \left(\prod_{i=1}^{N-1} (2 e^{-2 \mu} (\cosh \beta)^3)^{k_i} \right) \\
	& \times \sum_{E \in \mathcal{E}(T)} (\tanh \beta)^{|E|},
	\end{split}
\label{eq:Z_hte}
\end{equation}
which yields 
\begin{equation}
	Z_{N}(\beta,\mu) \geq 2\sum_{l\geq 1} (2e^{-\mu} (\cosh \beta)^2)^l \sum_{T \in \mathcal{T}_{N,l}} \left(\prod_{i=1}^{N-1} (2 e^{-2 \mu} (\cosh \beta)^3)^{k_i} \right).
\label{eq:Z_loboundht}
\end{equation}
Using first (\ref{eq:Tnk_card}),  and then using the function defined in  
Corollary \ref{CA}, we derive from (\ref{eq:Z_loboundht})
\begin{equation}
	\begin{split}
		Z_{N}(\beta,\mu) & \geq 2\sum_{l\geq 1} (2e^{-\mu} (\cosh \beta)^2)^l \\
		& \times \sum_{k_1\geq 1, \ldots , k_{N-1}\geq 1} \prod_{i=1}^{N-1} \binom{k_{i+1}+k_i-1}{k_i-1}\left(2 e^{-2 \mu} (\cosh \beta)^3\right)^{k_i} \\ 
		& = 2 W_N(x,y),
	\end{split}
\end{equation}
where $y=2e^{-\mu} (\cosh \beta)^2$,
$x=2 e^{-2 \mu} (\cosh \beta)^3=e^{-\mu} \cosh \beta \ y$.
Therefore, if $Z_{N}(\beta,\mu)$ is finite for all $N$, it follows from Corollary
 \ref{CA} that
\begin{equation}\label{eq:mu_loboundht}
	\mu > \log(1+ \cosh \beta + \cosh(2\beta)).
%\label{eq:}
\end{equation}

\subsubsection{Low-temperature region}
A lower bound for the partition function (\ref{Z}) is provided by the contributions of the two ground state configurations, that is all spins are "up'' or all spins are "down''. Therefore, we have
\begin{equation}
	\begin{split}
		Z_{N}(\beta,\mu) &  \geq 2 \sum_{l \geq 1} \sum_{T \in \mathcal{T}_{N,l}} e^{\beta |E(T)| - \mu |F(T)| } \\
		& = 2 \sum_{l \geq 1} e^{(2 \beta - \mu)l} 
\sum_{k_1\geq 1, \ldots , k_{N-1}\geq 1} 
\prod_{i=1}^{N-1} \binom{k_{i+1}+k_i-1}{k_i-1}
e^{(3\beta-2\mu)k_i}\\
& = 2 W_N\left(e^{3\beta-2\mu},  e^{2\beta-\mu}\right),
	\end{split}
%\label{eq:}
\end{equation}
where the last function is again as defined in Corollary \ref{CA}.
Due to the result of Corollary \ref{CA} 
the partition function is finite for all $N$ only if
\begin{equation} \label{eq:mu_loboundlt}
	\mu > \beta + \log(1+e^\beta).
%\label{eq:}
\end{equation}

Putting together the bounds (\ref{eq:mu_loboundht}) and (\ref{eq:mu_loboundlt}), the second part of Theorem \ref{thm:cr_bounds} follows. This finishes the proof of Theorem \ref{thm:cr_bounds}. $\Box$

\section{Behaviour of the partition function}

\begin{lemma}
The partition function $Z_{N}(\beta,\mu)$ is a continuous function of $\beta$ at $\beta = 0$, in particular 
\begin{equation}
	Z_{N}(0,\mu) = \lim_{\beta \searrow 0} Z_{N}(\beta,\mu) = 2 W_{N}(2 e^{-2\mu},2 e^{- \mu}).
%\label{eq:}
\end{equation}
\end{lemma}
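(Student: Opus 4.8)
The plan is to regard $Z_N(\beta,\mu)$ as a countable sum over triangulations whose summands are finite, manifestly continuous spin-sums, and then to push the limit $\beta\searrow 0$ inside that sum by dominated convergence. I work in the regime $\mu>\log 3$, where (by the Corollary immediately preceding the statement) the relevant quantities are finite. Starting from (\ref{Z1}), write
\begin{equation}
Z_N(\beta,\mu)=\sum_{T\in\mathcal{T}_N}e^{-\mu|F(T)|}\,g_T(\beta),\qquad g_T(\beta):=\sum_{\sigma\in\Omega(T)}e^{-\beta H(T,\sigma)}.
\end{equation}
For each fixed $T$ the function $g_T$ is a finite sum of $2^{|V(T)|}$ exponentials, hence continuous in $\beta$, and since $e^{-\beta H}\to 1$ as $\beta\to 0$ one has $g_T(0)=2^{|V(T)|}$. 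The only genuine point is the interchange of $\lim_{\beta\searrow 0}$ with the sum over the countable set $\mathcal{T}_N$.

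For a $\beta$-independent majorant I use $H(T,\sigma)\geq -|E(T)|$, valid because each of the $|E(T)|$ edge potentials $\sigma_u\sigma_v$ is at least $-1$. Thus for every $\beta\in[0,\beta_0]$,
\begin{equation}
g_T(\beta)\leq 2^{|V(T)|}e^{\beta_0|E(T)|}.
\end{equation}
Substituting the identities (\ref{vs}), (\ref{tri}), (\ref{edges}) and summing over $\mathcal{T}_{N,\mathbf{k}}$ with the count (\ref{eq:Tnk_card}) turns the dominating series into
\begin{equation}
\sum_{T\in\mathcal{T}_N}2^{|V(T)|}e^{\beta_0|E(T)|-\mu|F(T)|}=2\,W_N\!\left(2e^{3\beta_0-2\mu},\,2e^{2\beta_0-\mu}\right),
\end{equation}
with $W_N$ as in Corollary \ref{CA}.

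It remains to check this majorant is finite for some $\beta_0>0$. At $\beta_0=0$ the arguments are $x=2e^{-2\mu}$ and $y=2e^{-\mu}$; since $y^2=2x$, the two conditions of Corollary \ref{CA} become $2e^{-2\mu}\leq 1/4$ and $3x-y=6e^{-2\mu}-2e^{-\mu}<0$, the latter being exactly $\mu>\log 3$ and the former then automatic (indeed $2e^{-2\mu}<2/9<1/4$). Both inequalities are strict when $\mu>\log 3$, so by continuity of $x(\beta_0)$ and $y(\beta_0)$ I may fix $\beta_0>0$ small enough that they persist, making the majorant finite. Dominated convergence then gives
\begin{equation}
\lim_{\beta\searrow 0}Z_N(\beta,\mu)=\sum_{T\in\mathcal{T}_N}e^{-\mu|F(T)|}g_T(0)=\sum_{T\in\mathcal{T}_N}2^{|V(T)|}e^{-\mu|F(T)|}=Z_N(0,\mu),
\end{equation}
which is the asserted continuity.

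The final identification is the same computation as the majorant with $\beta_0=0$: substituting (\ref{vs})--(\ref{edges}) and (\ref{eq:Tnk_card}) into $\sum_T 2^{|V(T)|}e^{-\mu|F(T)|}$ yields precisely $2W_N(2e^{-2\mu},2e^{-\mu})$. Equivalently, one may read this off the high-temperature expansion (\ref{eq:Z_hte}) at $\beta=0$, where $\cosh\beta=1$ and $\tanh\beta=0$ annihilates every term of $\sum_{E\in\mathcal{E}(T)}(\tanh\beta)^{|E|}$ except the empty set in (\ref{setE}), leaving the value $1$. I expect the only step requiring care to be the verification that the convergence region of Corollary \ref{CA} remains open for a strictly positive $\beta_0$; everything else is routine bookkeeping with the combinatorial identities (\ref{vs})--(\ref{edges}).
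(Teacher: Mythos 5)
Your proof is correct, and it takes a genuinely different route from the paper's. The paper proves this lemma by a sandwich: the lower bound (\ref{eq:Z_loboundht}), obtained from the high-temperature expansion by keeping only the empty even subgraph, and the upper bound (\ref{eq:Z_upbound}), obtained from $H(T,\sigma)\geq -|E(T)|$, give
\begin{equation}
2\,W_{N}\bigl(2 e^{-2 \mu} (\cosh \beta)^3,\,2e^{-\mu} (\cosh \beta)^2\bigr) \;\leq\; Z_{N}(\beta,\mu) \;\leq\; 2\,W_{N}\bigl(2 e^{3\beta-2\mu},\,2 e^{2 \beta - \mu}\bigr),
\end{equation}
and one then lets $\beta \searrow 0$, both arms converging to the common value $2W_N(2e^{-2\mu},2e^{-\mu})$. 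You instead keep the decomposition over individual triangulations and apply dominated convergence, using as majorant exactly the summand that produces the paper's upper bound, frozen at a small $\beta_0>0$; you need no lower bound at all, since you evaluate $Z_N(0,\mu)$ directly, and your bookkeeping is right on all counts: the identification of $\sum_T 2^{|V(T)|}e^{-\mu|F(T)|}$ with $2W_N(2e^{-2\mu},2e^{-\mu})$ via (\ref{vs})--(\ref{edges}) and (\ref{eq:Tnk_card}), the reduction $y^2-y+x=3x-y<0\iff \mu>\log 3$ (using $y^2=2x$), the automatic bound $x<2/9<1/4$, and the openness argument letting $\beta_0>0$ persist. Comparing what each approach buys: the paper's sandwich yields quantitative two-sided bounds valid for all admissible $\beta$ and avoids any explicit interchange of limit and sum, but it tacitly uses (one-sided) continuity of $W_N$ in its arguments, which the paper never states --- one must observe that it follows from monotone convergence, since both arguments of $W_N$ decrease to $(2e^{-2\mu},2e^{-\mu})$ as $\beta\searrow 0$. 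Your argument is self-contained on precisely that point (it needs only finiteness of the majorant at one interior point, supplied by Corollary \ref{CA}) and delivers slightly more: uniform convergence of the series on $[0,\beta_0]$, hence continuity of $Z_N(\cdot,\mu)$ on a whole right-neighbourhood of $0$, not merely at $0$. Your standing hypothesis $\mu>\log 3$ matches the paper's implicit one ($\mu>\beta+\log(1+2e^{\beta})$, which reduces to $\mu>\log 3$ at $\beta=0$), so the regimes of validity coincide.
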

\begin{proof}
First we note that
\begin{equation}
	H(T,\sigma) \geq - |E(T)|.
%\label{eq:}
\end{equation}
Therefore,
\begin{equation}
	\begin{split}
		Z_{N}(\beta,\mu) & \leq \sum_{l \geq 1} \sum_{T \in \mathcal{T}_{N,l}} 2^{|V(T)|} e^{\beta |E(T)| - \mu |F(T)|} \\
		& = 2 \sum_{l \geq 1} (2 e^{2 \beta - \mu})^l \sum_{\textbf{k}:k_N=l} \sum_{ T \in  \mathcal{T}_{N,\textbf{k}}} \prod_{i=1}^{N-1} (2 e^{3\beta-2\mu})^{k_i}.
	\end{split}
\label{eq:Z_upbound}
\end{equation}
Hence, from  (\ref{eq:Z_upbound}) and (\ref{eq:Z_loboundht}) we get
\begin{equation}
	2 W_{N}(2 e^{-2 \mu} (\cosh \beta)^3,2e^{-\mu} (\cosh \beta)^2) \leq Z_{N}(\beta,\mu) \leq 2 W_{N}(2 e^{3\beta-2\mu},2 e^{2 \beta - \mu}),
%\label{eq:}
\end{equation}
for $\mu > \beta + \log(1+2e^\beta)$, and the result follows. \qed
\end{proof}

\section{Relation with previous results. Duality}

A model similar to the one studied in the present paper has been previously investigated in \cite{HSYZ} and \cite{Hern}. There, the authors consider the annealed coupling of causal triangulations of a torus with an Ising model, where the spins are attached on the faces of the triangulation, that is on the vertices of the dual graph of the triangulation (not on the vertices of the triangulation itself, as in our case):
\begin{equation}
	\sigma^{\Delta}=(\sigma_t, t\in F(T))\in \{-1,1\}^{|F(T)|},
%\label{eq:}
\end{equation}
and  the Hamiltonian of the spin-graph $(T, \sigma^{\Delta})$, is given by
\begin{equation}\label{Htri}
        H^{\Delta}(T, \sigma^{\Delta})= -\sum_{t\sim t' }\sigma_t \sigma_{t' },
\end{equation}
where the sum runs over the adjacent triangles $t,t' \in F(T)$.

In particular, in \cite{HSYZ} the authors introduce a transfer-matrix formalism which is used to obtain some bounds on the critical line. 
Then, in \cite{Hern} the author uses  Fortuin-Kasteleyn 
formalism of random-cluster model 
of the partition function, which also adds some details to the analysis in \cite{HSYZ}, 
(although  it does not improve the bounds of \cite{HSYZ} for the critical parameters).

Note that, though our model and the one in \cite{HSYZ} are similar, a quantitative comparison between the two models is not straightforward and the results found in a model do not automatically follow from the other. In particular, this is due to the different geometries of the triangulated manifolds on which the spin system lie, the plane in our case and the torus in \cite{HSYZ} and \cite{Hern}.

On the other hand, the two models agree at least qualitatively on the behaviour of the critical line, as a comparison between the phase diagrams in the two cases shows. Moreover, the arguments presented in this paper might be applied to the model studied in \cite{HSYZ} to improve the bounds on the critical line, at least in the high and low temperature regime.

In the following,  we will derive, using a Kramers-Wannier duality argument, quantitative relations between our model and the one with spins on 
the dual graph, which is closely related to the models of  \cite{HSYZ} and \cite{Hern}. 

Let us start by defining the dual graph of a triangulation. Given a finite triangulation $T$, its dual graph $T^*$
 has exactly one vertex on  each triangle face of $T$ and one on the 
 outer face, denoted by $v_O$;  each of 
the edges of  $T^*$ intersects one and only one edge of graph $T$, 
and every edge of $T$ is intersected by one edge of $T^*$. 
Hence, the following relations hold
\begin{align}
	|E(T^*)| & =|E(T)|, \label{eq:edgedual} \\
	|V(T^*)|& =|F(T)|+1, \label{eq:vxdual}
\end{align}
and the degree of every vertex in $T^*$, except $v_O$, is 3. 

Consider now the partition function
\begin{equation}
	Z_{N,l}(\beta,\mu) = \sum_{ T \in \mathcal{T}_{N,l}} e^{-\mu |F(T)|} \sum_{\sigma \in \{-1,1\}^{|V(T)|}} e^{- \beta  H(T, \sigma)}.
\label{eq:Z_fixb}
\end{equation}
The notion of dual graph allows us to express eq. (\ref{eq:Z_fixb}) as 
\begin{equation}
	Z_{N,l}(\beta,\mu) = \sum_{ T \in \mathcal{T}_{N,{k}}} e^{-\mu |F(T)|+ \beta|E(T)|} \, 2 \sum_{\Gamma \subset E(T^*)} e^{- 2 \beta  |\Gamma |},
\label{Zn}
\end{equation}
where $\Gamma$ is a subset of edges (including the empty one)  consisting of  edge-disjoint  connected components of $T^*$ each of which is a cycle. 
In other words, every vertex in $\Gamma$ has an even degree in $T^*$. Such subgraphs are called even subgraphs. The above expression of the partition function is known as low-temperature expansion. 

Observe that the interactions in (\ref{Htri}) through  the adjacent triangles can be rewritten as interaction through the edges of the dual graph. Therefore, the partition function for causal triangulations with spins  on the faces is given by
\begin{equation}
Z_{N,l}^{\Delta}(\beta^*,\mu^*) = \sum_{ T \in \mathcal{T}_{N,l}} e^{-\mu^* |F(T)|} \sum_{\sigma \in \{-1,1\}^{|V(T^*)|}} e^{- \beta^*  H(T^*, \sigma)} ,
\end{equation}
which still slightly differs from the one studied in \cite{HSYZ} and \cite{Hern} (because, as said above, we consider triangulations of a plane, not of a torus). 
Let us  rewrite $Z_{N}^{\Delta}(\beta^*,\mu^*)$ using the high-temperature expansions (\ref{HT}):
\begin{equation}\label{eq:zhte}
	Z_{N,l}^{\Delta}(\beta^*,\mu^*) = \sum_{ T \in \mathcal{T}_{N,l}} e^{-\mu^* |F(T)|}  (\cosh \beta^*)^{|E(T^*)|} 2^{|V(T^*)|} \sum_{\Gamma \subset E(T^*)} (\tanh \beta^*)^{|\Gamma|},
\end{equation} 
where the sum runs over even subgraphs $\Gamma$ as in (\ref{Zn}).

For each $\beta>0$ let us now define the transformations $\beta^*=\beta^*(\beta)$ and $\mu ^*=\mu ^*(\mu, \beta )$, so that
\begin{gather}
	\tanh \beta^* = e^{- 2 \beta} , \label{eq:tempdual} \\
	\mu ^* = \mu + \log 2 -\frac{3}{4}\log{(2 \sinh{2\beta})} \label{eq:cosmdual}.
\end{gather}
Using eq. (\ref{eq:tempdual}) and eqs. (\ref{eq:edgedual})-(\ref{eq:vxdual}),  the partition function (\ref{eq:zhte}) reads
\begin{equation}%\label{e4}
\begin{split}
	Z_{N,l}^{\Delta}(\beta^*,\mu^*) & = \sum_{ T \in \mathcal{T}_{N,l}} e^{-\mu^* |F(T)|}  (\cosh \beta^*)^{|E(T)|} 2^{|F(T)|} \left( e^{\beta} (\tanh \beta^*)^{1/2} \right)^{|E(T)|} \\
	& \times 2  \sum_{\Gamma         \subset E(T^*)} e^{-2 \beta |\Gamma|}.
\end{split}
\end{equation} 
Then,  taking into account that
\begin{equation}
	|E(T)|=\frac{3}{2}|F(T)|+ \frac{1}{2}|S_N|=\frac{3}{2}|F(T)|+ \frac{1}{2}l,
%\label{eq:}
\end{equation} 
and noting that eq. (\ref{eq:tempdual}) is equivalent to $\sinh 2\beta \sinh 2\beta^* = 1$, using eq. (\ref{eq:cosmdual}) a straightforward calculation leads to
\begin{equation}%\label{e4}
	Z_{N,l}^{\Delta}(\beta^*,\mu^*) = \left(2\sinh{2\beta} \right)^{ -\frac{l}{4}} 	Z_{N,l}(\beta,\mu).
\end{equation} 

\section{Magnetization of the central spin}

First, observe that 
\[Z_{N,l}^+(\beta,\mu)<Z_{N+1}(\beta,\mu).\]
Hence,  the partition function $Z_{N,l}^+(\beta,\mu)$ exists for all $N$ when
$(\mu, \beta ) \in \Delta_{f}$ defined in (\ref{eq:sub_reg}), i.e., 
\begin{equation}
	\mu > \beta + \log(1+ 2 \cosh\beta).
\label{J30}
\end{equation}

For $(l,+)$-boundary conditions, the high-temperature expansion (\ref{HT}) of the partition function reads as
\begin{equation}
	Z_{N,l}^+(\beta,\mu) = 2 (e^{\beta -\mu} \cosh \beta)^l \sum_{\textbf{k}:k_N=l} \sum_{ T \in  \mathcal{T}_{N,\textbf{k}}} \left(\prod_{i=1}^{N-1} (2 e^{-2 \mu} (\cosh \beta)^3)^{k_i} \right) 
\label{eq:Zpl_hte}
\end{equation}
\[\times  \sum_{A \in \mathcal{E}^+(T)} (\tanh \beta)^{|E|},\]
where
\begin{equation}
	\mathcal{E}^+(T) = \{ A \subseteq E(T) \setminus E(S_N) : I(i,A) \, \text{is even} \, \forall i \in  V(T) \setminus V(S_N)\}.
%\label{eq:}
\end{equation}
Similarly, the expected value of the magnetization of the central spin (\ref{eq:s0p_expect}) can be written as
\begin{equation}
	\begin{split}
	\langle \sigma_0 \rangle_{N,l,\beta,\mu}^+ = & \frac{(e^{\beta -\mu} \cosh \beta)^l}{Z_{N,l}^+(\beta,\mu) } \sum_{\textbf{k}:k_N=l} \sum_{ T \in  \mathcal{T}_{N,\textbf{k}}}  \left(\prod_{i=1}^{N-1} (2 e^{-2 \mu} (\cosh \beta)^3)^{k_i} \right) \\
	& \times  \sum_{A \in \mathcal{E}^+_0(T)} (\tanh \beta)^{|E|},
	\end{split}
\label{J31}
\end{equation}
where now
\begin{equation}
	\begin{split}
		\mathcal{E}_0^+(T) = & \{ A\subseteq E(T) \setminus E(S_N) : \\
		& I(i,A) \, \text{is even} \, \forall i \in  V(T)  \setminus \{ v_0\} \setminus V(S_N), \ \ I(v_0,A)  \, \text{is odd} \}.
	\end{split}
%\label{eq:}
\end{equation}

\subsection{Magnetization for finite random triangulations}

\begin{proof}[Proof of Theorem \ref{TC}]
Since for any $T \in \mathcal{T}_{N,l}$ there exists a path $\gamma$, with $|\gamma| = N$, connecting the root $v_0$ to $S_N$, such that $E(\gamma) \in \mathcal{E}_0^+(T)$, we have by (\ref{J31}) 
\begin{equation}
	\begin{split}
		\langle \sigma_0 \rangle_{N,l,\beta,\mu}^+ & \geq \frac{(e^{\beta -\mu} \cosh \beta)^l}{Z_{N,l}^+(\beta,\mu) } (\tanh \beta)^N \sum_{T \in \mathcal{T}_{N,l}} \left(\prod_{i=1}^{N-1} (2 e^{-2 \mu} (\cosh \beta)^3)^{k_i} \right) \\
		& = \frac{(e^{\beta -\mu} \cosh \beta)^l}{Z_{N,l}^+(\beta,\mu) } (\tanh \beta)^N W_{N,l}(2 e^{-2 \mu} (\cosh \beta)^3) > 0
	\end{split}
%\label{eq:}
\end{equation}
for any $(\beta ,\mu)$ which satisfy (\ref{J30}). This (together with the symmetry in the model) proves Theorem 
\ref{TC}. \qed
\end{proof}

\subsubsection{Magnetization at high-temperature. Infinite random triangulation}

\begin{proof}[Proof of Theorem \ref{thm:mag}]
Note that under assumption that $2 e^{-2 \mu} (\cosh \beta)^3 < 1/4$ from eq. (\ref{eq:Zpl_hte}) it follows that
\begin{equation}
	Z_{N,l}^+(\beta,\mu) \geq 2 (e^{\beta -\mu} \cosh \beta)^l W_{N,l}(2 e^{-2 \mu} (\cosh \beta)^3).
\label{eq:Zpl_lobound_ht}
\end{equation}

Consider now formula  (\ref{J31}). Let  $A \in \mathcal{E}^+_0(T)$ be a subset of edges.
Since by the definition  the degree of $v_0$ in $A$
is odd, it has to be positive.
 Hence, the set $A$ is not empty: it contains at least one edge incident to $v_0$.  Consider now the connected component of $A$ which contains $v_0$, denote it $C$. Notice, that $C\subseteq A$. Let also $V(A)$ and $V(C)$ denote the set of vertices spanned by the graphs $A$ and $C$, correspondingly. 
 For any $v\in V(C)$ let $\nu_C(v)$ be the degree of the vertex $v$ in the component $C$. Observe that $\nu_C(v)=I(v, A)$, as we introduced above the incidence number.

Notice that by the property of a graph the sum of all degrees is twice the number of the edges, i.e., 
\begin{equation}
	2|C|=\sum_{v\in V(C)}\nu_C(v) = I(v_0, A) + \sum_{v\in V(C):v\neq v_0}I(v, A).
%\label{eq:}
\end{equation}
Since $I(v_0, A)$ is an odd number, we must have at least one odd number in the remaining sum. But the only vertices in $V(T)$ which might have an odd degree are on the slice $S(N)$. Hence, the connected 
component $C$ has at least one vertex on the slice $S_N$, which implies that $C$
must contain a path between $v _0$ and $S_N$. This gives us a bound 
$|A| \geq |C| \geq N$. Now taking into account that 
$\tanh \beta < 1$,
 we derive from  (\ref{J31})  the inequality
\begin{equation}
	\begin{split}
		\langle \sigma_0 \rangle_{N,l,\beta,\mu}^+ & \leq \frac{(e^{\beta -\mu} \cosh \beta)^l}{Z_{N,l}^+(\beta,\mu) } (\tanh \beta)^N  \\
		 & \times \sum_{\textbf{k}:k_N=l} \sum_{ T \in  \mathcal{T}_{N,\textbf{k}}} \left(\prod_{i=1}^{N-1} (2 e^{-2 \mu} (\cosh \beta)^3)^{k_i} \right) |\mathcal{E}^+_0(T)|.
	\end{split}
\label{eq:mag_bound1}
\end{equation}

Using a rough bound  $|\mathcal{E}^+_0(T)| \leq 2^{|E(T)|-l}$ and  (\ref{eq:Zpl_lobound_ht}), 
we get from (\ref{eq:mag_bound1})
\begin{equation}
	\begin{split}
		\langle \sigma_0 \rangle_{N,l,\beta,\mu}^+ & \leq \frac{(e^{\beta -\mu} \cosh \beta)^l}{Z_{N,l}^+(\beta,\mu)} (\tanh \beta)^N \\
		& \times \sum_{\textbf{k}:k_N=l} \sum_{ T \in  \mathcal{T}_{N,\textbf{k}}} \left(\prod_{i=1}^{N-1} (2 e^{-2 \mu} (\cosh \beta)^3)^{k_i} \right) 2^{|E(T)|-l} \\
		&  =  \frac{(2e^{\beta -\mu} \cosh \beta)^l}{Z_{N,l}^+(\beta,\mu)} (\tanh \beta)^N \sum_{\textbf{k}:k_N=l} \sum_{ T \in  \mathcal{T}_{N,\textbf{k}}}  \prod_{i=1}^{N-1} (16 e^{-2 \mu} (\cosh \beta)^3)^{k_i}  \\
		&  \leq 2^{l-1} (\tanh \beta)^N \frac{W_{N,l}(16 e^{-2 \mu} (\cosh \beta)^3)}{W_{N,l}(2 e^{-2 \mu} (\cosh \beta)^3)}
	\end{split}
%\label{eq:}
\end{equation}
which holds at least for all $\mu > 3/2 \log (\cosh \beta) + 3 \log 2$. 
Therefore, using (\ref{eq:W_asym}), we have that for $\beta$ small enough,
\begin{equation}
	\lim_{N \to \infty}\langle \sigma_0 \rangle_{N,l,\beta,\mu}^+ = 0.
%\label{eq:}
\end{equation}
This proves the statement of Theorem \ref{thm:mag}. \qed
\end{proof}

\section{Proof of Lemma \ref{Mp} }

By the definitions  (\ref{J7})  and (\ref{Gf}) we have
\begin{equation}
	\begin{split}
		p_{N,\beta, \mu} \left\{ (T_K, \sigma_K) \mid  ({\widetilde T},  {\widetilde \sigma}) \right\} & = 
		\frac{ \sum_{ T\in T_K \cup {\widetilde T} }\ p_{N,\beta, \mu} (T, (\sigma_K,  {\widetilde \sigma}))}{\sum_{  (T'_K, \sigma'_K ) \in {\Lambda}_K} \sum_{T\in  T'_K \cup {\widetilde T}}
 p_{N,\beta, \mu} (T, (\sigma'_K,  {\widetilde \sigma})) } \\
		& = \frac{ \sum_{ T\in T_K \cup {\widetilde T} }\ e^{-\beta H(T, (\sigma_K,  {\widetilde \sigma}))-\mu F(T)} }{ \sum_{  (T'_K, \sigma'_K ) \in {\Lambda}_K } 
\sum_{ T\in  T'_K \cup {\widetilde T} } e^{-\beta H(T, (\sigma'_K,  {\widetilde \sigma}))-\mu F(T)} }.
	\end{split}
\label{J1}
\end{equation}

Consider the numerator of the last fraction. 
For any $T\in T_K \cup {\widetilde T}$
let us denote $I(T_K, {\widetilde T})=I (S_K, {\widetilde S}_{K+1})$ the set of edges between these two graphs  $T_K$ and ${\widetilde T}$; call it an 
{\it interaction} set. 
Then  by the definition (\ref{H}) we have for all $T\in T_K \cup {\widetilde T}$
\begin{equation}\label{J2}
H(T, (\sigma_K,  {\widetilde \sigma}))=
H(T_K, \sigma_K) + H( {\widetilde T}, {\widetilde \sigma}) +
H\left( I (S_K, {\widetilde S}_{K+1}), (\sigma_{S_K}, {\widetilde \sigma}_{{\widetilde S}_{K+1}})\right).
\end{equation}
Also, counting the number of triangles we get for all $T\in T_K \cup {\widetilde T}$
\begin{equation}\label{J3}
F(T)=F(T_K)+ F( {\widetilde T})+F(I (S_K, {\widetilde S}_{K+1}))=
F(T_K)+ F( {\widetilde T})+|S_K|+|{\widetilde S}_{K+1}|.
\end{equation}
Similar relations to (\ref{J2}) and (\ref{J3}) hold as well for any $T\in T'_K \cup {\widetilde T}$. Making use of (\ref{J2}) and (\ref{J3}) we derive from (\ref{J1}) 
\begin{equation}\label{J9}
  p_{N,\beta, \mu} \left\{ (T_K, \sigma_K) \mid  ({\widetilde T},  {\widetilde \sigma}) \right\} = \frac{ \sum_{ T\in T_K \cup {\widetilde S}_{K+1} }\ 
e^{-\beta H(T, (\sigma_K,  {\widetilde \sigma}_{K+1}))-\mu F(T)}
}{
\sum_{  (T'_K, \sigma'_K ) \in {\Lambda}_K   }
\sum_{
T\in  T'_K \cup {\widetilde S}_{K+1}  \ 
}
 e^{-\beta H(T, (\sigma'_K,  {\widetilde \sigma}_{K+1}))-\mu F(T)}
}.
\end{equation}
It remains to notice  that the denumerator in  (\ref{J9})  equals 
\begin{equation}\label{J8}
\sum_{ (T, \sigma) \in \Lambda_{K+1, |{\widetilde S}_{K+1}|}^{{\widetilde \sigma}_{K+1}}}
e^{-\beta H(T, \sigma)-\mu F(T)}
=Z_{K+1, |{\widetilde S}_{K+1}|}^{ {\widetilde\sigma}_{K+1}}(\beta, \mu).
\end{equation}

Combining  (\ref{J9}) with  (\ref{J8}) we get the first equality in   (\ref{p1}); the second equality follows by Definition \ref{DBC}. \qed

\appendix

\section{Proof of Lemma \ref{LA}.}
\label{apx:series}

Consider the multiple series
\begin{equation}
	W_{n+1,l}(x) = \sum_{k_1 = 1}^\infty \cdots \sum_{k_n = 1}^\infty \prod_{i=1}^{n} \binom{k_{i+1}+k_i-1}{k_i-1} x^{k_i},
%\label{eq:}
\end{equation}
with $k_{n+1}=l$. Summing over $k_1$ we obtain
\begin{equation}
	\sum_{k_1 = 1}^\infty \binom{k_1+k_2-1}{k_1 -1} x^{k_1} = \frac{x}{1-x} \left( \frac{1}{1-x} \right)^{k_2} = x B_1^{k_2+1},
%\label{eq:}
\end{equation}
where we denoted $B_1 = (1- x)^{-1}$. Inserting it in the equation and summing over $k_2$ we obtain
\begin{equation}
	\sum_{k_2 = 1}^\infty\binom{k_2+k_3-1}{k_2 -1}  x B_1  (x B_1)^{k_2} = \frac{x^2 B_1^2}{1 - x B1} \left(\frac{1}{1 - x B_1}\right)^{k_3} = x^2 B_1^2 B_2^{k_3+1},
%\label{eq:}
\end{equation}
where $B_2 = (1- x B_1)^{-1}$. Summing over  the remaining $k_i$'s, we obtain
\begin{equation}
	W_{n+1,l}(x) = x^n B_n(x)^{l+1}\prod_{i=1}^{n-1} B_i(x)^2,
\label{eq:W_sol}
\end{equation}
where $B_i(x)$ is the solution to  the recursion relation
\begin{equation}
\begin{split}
	B_i & = \frac{1}{1-x B_{i-1}}, \\
	B_1 & = \frac{1}{1-x},
\end{split}
%\label{eq:}
\end{equation}
This reads
\begin{equation}
	B_i(x) = 2 \frac{c_+(x)^{i+1} - c_-(x)^{i+1}}{c_+(x)^{i+2} - c_-(x)^{i+2}},
\label{eq:rec_sol}
\end{equation}
with 
\begin{equation}
	c_\pm(x) = 1 \pm \sqrt{1-4x}.
%\label{eq:}
\end{equation}

For $0<x<1/4$, substituting (\ref{eq:rec_sol}) into (\ref{eq:W_sol}) we get
\begin{equation}
	W_{n+1,l}(x) = 2^{l+3} (1-4x) (4x)^n \frac{(c_+(x)^{n+1} - c_-(x)^{n+1})^{l-1}}{(c_+(x)^{n+2} - c_-(x)^{n+2})^{l+1}}, 
%\label{eq:}
\end{equation}
which gives
\begin{equation}
	\lim_{n \to \infty} W_{n,l}(x) = 0.
%\label{eq:}
\end{equation}
In particular we have
\begin{equation}
	W_{n+1,l}(x) \sim f_{l}(x) \left( \frac{4x}{c_+(x)^2} \right)^{n}, \quad \text{for } n \to \infty, 
%\label{eq:}
\end{equation}
where
\begin{equation}
	f_l(x) = (1-4x) \left( \frac{2}{c_+(x)} \right)^{l+3} .
%\label{eq:}
\end{equation}
This yields the first statement of Lemma \ref{LA}. The rest follows directly by the results of \cite{MYZ01}. \qed

\section{Proof of Corollary \ref{CA}.}

We have that, for any $x \in (0,1/4)$, $B_i(x)$ is monotonically increasing
\begin{equation}
	\begin{split}
		\frac{B_{i-1}(x)}{B_{i}(x)} & = \frac{(c_+(x)^{n+2} - c_-(x)^{n+2})(c_+(x)^{n} - c_-(x)^{n})}{(c_+(x)^{n+1} - c_-(x)^{n+1})^2} \\
		& = \frac{c_+(x)^{n+1} + c_-(x)^{n+1} - (c_-(x) c_+(x))^{n} (c_+^2 + c_-^2)}{(c_+(x)^{n+1} - c_-(x)^{n+1})^2} \\
		& = 1 - \frac{(c_-(x) c_+(x))^{n} (c_+ - c_-)^2}{(c_+(x)^{n+1} - c_-(x)^{n+1})^2} \\
		& = 1 -  \frac{4 (4x)^n (1-4x)}{(c_+(x)^{n+1} - c_-(x)^{n+1})^2} < 1.
	\end{split}
%\label{eq:}
\end{equation}
Therefore, using that
\begin{equation}
	\lim_{i \to \infty} B_i(x) = \frac{2}{c_+(x)},
%\label{eq:}
\end{equation}
we obtain that, for any $i \in \mathbb{N}$ and $x \in (0,1/4)$,
\begin{equation}
	B_i(x) < \frac{2}{1 + \sqrt{1-4x}}.
\label{eq:B_bound}
\end{equation}

Now, consider the series
\begin{equation}
	W_{n+1}(x,y) = \sum_{l \geq 1} y^l W_{n+1,l}(x).
%\label{eq:}
\end{equation}
By eq. (\ref{eq:W_sol}), we have that the series is convergent if and only if
\begin{equation}
	y B_{n}(x) < 1. 
\label{I5}
\end{equation}
Formula (\ref{eq:B_bound}) together with Lemma \ref{LA} yield that  the inequality 
(\ref{I5}) 
is satisfied for any $n \in \mathbb{N}$ and  if and only if
$x \in (0,1/4]$ and
\begin{equation}
	\frac{2 y}{1 + \sqrt{1-4x}} <1,
%\label{eq:}
\end{equation}
that is
\begin{equation}
	y^2 - y + x < 0.
%\label{eq:}
\end{equation}
This proves the corollary. \qed

\section*{acknowledgements} 
The authors thank Prof. Bergfinnur Durhuus for the helpful discussions on the subject. This work was partially supported by the Swedish Research Council through the grant no. 2011-5507.

\end{document}